\newif\ifconfver
\newif\ifcutshort      
\newif\ifcutshortlvltwo  
\newcommand{\trace}{{\mbox{\textrm{\rm Tr}}}}
\newcommand{\rank}{{\mbox{\textrm{Rank}}}}
\newcommand{\bfSigma}{{\mbox{\boldmath $\Sigma$}}}
\newcommand{\st}{{\rm s.t.}}
\newcommand{\by}{\mathbf{y}}
\newcommand{\bx}{\mathbf{x}}
\newcommand{\bs}{\mathbf{s}}
\newcommand{\bL}{\mathbf{L}}
\newcommand{\bJ}{\mathbf{J}}
\newcommand{\bE}{\mathbf{E}}
\newcommand{\bC}{\mathbf{C}}
\newcommand{\bD}{\mathbf{D}}
\newcommand{\bM}{\mathbf{M}}
\newcommand{\bR}{\mathbf{R}}
\newcommand{\bH}{\mathbf{H}}
\newcommand{\bI}{\mathbf{I}}
\newcommand{\bU}{\mathbf{U}}
\newcommand{\bP}{\mathbf{P}}
\newcommand{\bQ}{\mathbf{Q}}
\newcommand{\bW}{\mathbf{W}}
\newcommand{\bV}{\mathbf{V}}
\newcommand{\bX}{\mathbf{X}}
\newcommand{\bY}{\mathbf{Y}}
\newcommand{\bd}{\mathbf{d}}
\newcommand{\bS}{\mathbf{S}}
\newcommand{\bhV}{\widehat{\mathbf{V}}}
\newcommand{\bhJ}{\widehat{\mathbf{J}}}
\newcommand{\bhS}{\widehat{\mathbf{S}}}
\newcommand{\bhC}{\widehat{\mathbf{C}}}
\newcommand{\bhU}{\widehat{\mathbf{U}}}
\newcommand{\bhE}{\widehat{\mathbf{E}}}
\newcommand{\btR}{\widetilde{\mathbf{R}}}
\newcommand{\btV}{\widetilde{\mathbf{V}}}
\newcommand{\hc}{\widehat{c}}
\newcommand{\tg}{\widetilde{g}}
\newcommand{\cI}{\mathcal{I}}
\newcommand{\cK}{\mathcal{K}}
\newcommand{\cQ}{\mathcal{Q}}
\newcommand{\cV}{\mathcal{V}}
\newcommand{\cS}{\mathcal{S}}
\newtheorem{lemma}{Lemma}
\newtheorem{thm}{Theorem}
\newtheorem{coro}{Corollary}
\newtheorem{remark}{Remark}
\begin{document}
\title{\huge Decomposition by Successive Convex Approximation:
A Unifying Approach for Linear Transceiver Design in Heterogeneous Networks}
\author{\normalsize Mingyi Hong, Qiang Li, Ya-Feng Liu \thanks{M. Hong is with the Department of Industrial and Manufacturing Systems Engineering, Iowa State University, Ames, IA, USA (email: \texttt{mingyi@iastate.edu}). Q.
Li is with the School of Communication and Information Engineering, University of Electronic Science and
Technology of China, Chengdu, China (email: \texttt{lq@uestc.edu.cn}). Y.-F. Liu is with the State Key Laboratory of Scientific and Engineering Computing, Institute of Computational Mathematics and Scientific/Engineering Computing, Academy of Mathematics and Systems
Science, Chinese Academy of Sciences, Beijing, 100190, China (email: \texttt{yafliu@lsec.cc.ac.cn}). M. Hong is supported in part by NSF under Grant CCF-1526078. Q. Li is supported in part by the National Natural Science Foundation of China under Grant 61401073 and 61531009. Y.-F. Liu is partially supported by the National Natural Science Foundation of China, Grants 11301516 and 11331012.}} \maketitle
\begin{abstract}
We study the downlink linear precoder design problem in a multi-cell dense heterogeneous network (HetNet). The problem is formulated as a general sum-utility maximization (SUM) problem, which includes as special cases many practical precoder design problems such as multi-cell coordinated linear precoding, full and partial per-cell coordinated multi-point transmission, zero-forcing precoding and joint BS
clustering and beamforming/precoding. The SUM problem is difficult due to its
non-convexity and the tight coupling of the users' precoders. In this paper we propose a novel convex approximation technique to approximate the original problem by a series of convex subproblems, each of which decomposes across all the cells.
The convexity of the subproblems allows for efficient computation,
while their decomposability leads to distributed implementation. {Our approach hinges upon the identification of certain key convexity properties of the sum-utility objective, which allows us to transform the problem into a form that can be solved using a popular algorithmic framework called BSUM (Block Successive Upper-Bound Minimization).} Simulation experiments show that
the proposed framework is effective for solving interference
management problems in large HetNet.
\end{abstract}
%

\section{Introduction}

Heterogeneous network (HetNet) has recently emerged as a promising
wireless network architecture \cite{3gpp09}. In HetNet, the upper tier
high-power Base Stations (BSs) such as Macro BSs provide per-cell interference
management and blanket coverage, while  the lower tier
low-power access points such as micro/pico/femto BSs are
densely deployed to provide capacity extension. This new paradigm of network design brings the
transmitters and receivers closer to each other, enabling
high link quality with low transmission power \cite{damnjanovic11}.

Due to the presence of a large number of potential interfering nodes, one of the key challenges in the design of the HetNet is to properly mitigate both the inter-cell and intra-cell multiuser interference. Interference management for the HetNet, or for general interfering
networks, has been under extensive research recently \cite{hong12survey,bjornson13}. An effective approach is to introduce appropriate coordination among the network nodes, either in the physical layer or in higher layers \cite{gesbert10}. For example, physical-layer coordination can take the form of
coordinated beamforming (CB) or joint processing (JP) \cite{gesbert10}. The CB coordinates the nodes in
the beamformer/precoder level \cite{Ho10, liu11MISO}, while the JP (a.k.a. the coordinated multi-point transmission, CoMP)
optimizes the transceivers assuming that the users' data are available at all the BSs.  In practice, dynamic combination of the two techniques is usually adopted for signaling overhead reduction \cite{zhang09, gesbert10}.

Mathematically, interference management is commonly formulated as some form of
sum-utility maximization (SUM) problem \cite{hong12survey}. The utility functions, when
chosen properly, can well balance the network spectrum efficiency
and the fairness level among the users. Unfortunately, for a large class of popular utility functions,
the associated optimization problems are difficult to solve (except
for a few special cases, see \cite{luo08a, liu11MISO,
liu13maxminTSP, liu13spl}). Therefore, many
low-complexity algorithms have been recently
developed. Often, the key in finding a good algorithm is to recognize certain convexity and
decomposability of the SUM problem, as the former leads to efficient computation, while
the latter is critical for distributed implementation \cite{Palomar06}.

The hidden convexity structures of various SUM problems have been extensively investigated in the past.
For example, reference \cite{Ye03} is the first to recognize the now well-known concave-convex property for a MIMO
interfering channel (IC), i.e., each users' achievable rate is {concave} in its own transmit covariance while being convex in all other interfering users' transmit covariances. Similar concave-convex properties have since been leveraged heavily to design resource allocation algorithms in different network settings \cite{Shi:2009,Tsiaflakis10, hong12_icassp, kim11, Ho10, papand09, liu11MISO, scutari13decomposition, Scutari12pricing, Weeraddana:2012:WSM:2432407}.  However, most of the schemes cited above do not decompose well across the nodes: for the schemes based on the convex-concave structure, the convexification procedures can only be done one node at each time, thus only {\it a single} node can
update its transmit strategy in each iteration; for the algorithm proposed in \cite{papand09}, the convex subproblem is still coupled
among all the users. Moreover, most of these algorithms are designed to handle {\it peer-to-peer networks}, with each transmitter transmiting to a single receiver, and/or with each receiver receiving from a single transmitter. Hence they are not directly applicable to the
HetNet setting where each BS or user can communicate with multiple nodes. Further, all the schemes mentioned above deal with smooth SUM problems, therefore they are not applicable to formulations that require nonsmooth penalizations \cite{hong12sparse}. Classical subgradient (SG) method \cite{bertsekas99} can potentially be used to solve such nonsmooth optimization, but in practice the SG can be very slow and its performance depends critically on the choice of the stepsizes.

Nevertheless, decomposability structure of interference management problems is highly desirable. When judiciously exploited, it
leads to efficient distributed implementation. This fact has long been recognized and leveraged in other important large-scale network optimization problems such as the network utility maximization (NUM) problems; see \cite{Palomar06}. However, unlike the NUM, the nodes
in the interfering networks are tightly coupled in a nonlinear manner through multi-user interference. Therefore even in simple peer-to-peer interfering networks, decomposability structure is difficult to come by. To address this issue, reference \cite{shi11WMMSE_TSP} proposes a weighted minimum mean square error (WMMSE) algorithm to solve the SUM problem in interfering broadcast channels (IBC). The most desirable feature of WMMSE is that each of its steps completely decouples among the interfering BSs. Stationary solutions of the SUM problem can be obtained via solving three subproblems alternatingly.  {Recently, there are a few works \cite{scutari13decomposition, Scutari12pricing, meisam14nips,Facchinei15} developing general parallel schemes for distributed optimization in multi-agent systems, which cover interfering wireless networks as a special case. For example, in \cite{scutari13decomposition} an interesting parallelization technique based on partial linearization is proposed. Generally speaking, there are subtle differences between parallelization and decomposition. The {parallelization techniques} such as those developed in \cite{scutari13decomposition, Scutari12pricing, meisam14nips,Facchinei15} gear toward solving a general multi-block problem a few blocks a time, but all requiring careful stepsize tuning. The decomposition techniques such as those used in \cite{Palomar06} and the ones to be discussed in this paper are more problem specific: they often find an alternative representation of the problem in an iterative manner, so that the variables automatically become independent (hence the name decomposable). We will elaborate more on the distinctions between these two types of approaches in the following presentation.} 

In this work, we propose to achieve decomposability by means of {\it successive convex approximation}. {The key novelty of our work is the  identification of an interesting hidden convexity for a wide range of sum-utility maximization problems, which allows us to interpret and generalize many existing algorithms under the recently developed BSUM (Block Successive Upper-Bound Minimization) framework \cite{Razaviyayn12SUM,hong15busmm_spm}}. By using BSUM, the original non-convex problem is approximated by a series of convex subproblems, each of which is completely decoupled among the network nodes. Based on different ways of solving the convex
subproblems, two low-complexity algorithms are proposed, each having wide applicability in interference management problems.

 The rest of the paper is organized as follows. In Section \ref{secSys}, we describe the system model and problem formulation. In Section \ref{secSCA}, we present a key convexity structure which leads to two general SCA algorithms. In Section \ref{secCustomizedAlgorithm}, the proposed algorithms are specialized to various interference management scenarios. Numerical results are given in Section \ref{secSimulation}, followed by concluding remarks.

 {\it Notations}: For a symmetric matrix
$\mathbf{X}$, $\mathbf{X}\succeq 0$ signifies that $\mathbf{X}$ is Hermitian
positive semi-definite. We use $\trace[\mathbf{X}]$, $|\mathbf{X}|$,
$\mathbf{X}^H$, $\rho(\mathbf{X})$, $\|\bX\|_F$ and
$\rank(\mathbf{X})$ to denote the trace, determinant, Hermitian,
spectral radius, Frobenius norm, and the rank of a matrix,
respectively. We use $\langle\cdot, \cdot\rangle$ to denote the inner product operation. The $(m,n)$-th element of a matrix $\bX$ is denoted by
$\bX[m,n]$. We use $\mathbf{I}_n$ to denote an $n\times n$ identity
matrix. Moreover, we let $\mathbb{R}^{N\times M}$ and
$\mathbb{C}^{N\times M}$ denote the set of real and complex $N\times
M$ matrices, and use $\mathbb{S}^{N}$, $\mathbb{S}^{N}_{+}$,
$\mathbb{S}^{N}_{++}$  to denote the set of $N\times N$ Hermitian,
Hermitian positive semi-definite and Hermitian positive definite
matrices, respectively. A list of other main notations are given in the following table.
\begin{table*}[htb]
\caption{ {A List of Notations} }
\begin{center}
{
\begin{tabular}{|c |c |c|c|}
\hline
$K$& \# of cells & $Q_k$& \#  of BSs in cell $k$ \\
\hline
$i_k$ & $i$-th user in $k$-th cell & $q_k$ & $q$-th BS in $k$-th cell\\
\hline
 $\mathbf{H}^{q_\ell}_{i_k}$ & channel matrix between the $q_{\ell}$-th BS and
$i_k$-th user & $I_k$ & \# of users in cell $k$\\
 \hline
 $\mathbf{H}^{\ell}_{i_k}$ & channel matrix between the all BSs in cell $\ell$ and
$i_k$-th user  & $d$& \# of data streams for each user\\
\hline
$\bV^{q_k}_{i_k}$ & precoder used by $q_{k}$-th BS for
$i_k$-th user & $\mathbf{U}_{i_k}$ & The receiver used by $i_k$-th user\\
\hline
$\bV_{i_k}$ & precoder used by all BSs in cell $k$ for
$i_k$-th user & $\mathbf{E}_{i_k}$ & The MSE matrix for $i_k$-th user\\
\hline
$\bC_{i_k}$ ($\bY_{i_k}$)& total received signal (interference) covariance matrix of user $i_k$ & $R_{i_k}$& transmit rate for user $i_k$ \\
\hline
\end{tabular} } \label{tableSymbols}
\end{center}
\vspace*{-0.2cm}
\end{table*}

\section{System Model and Problem Formulation}\label{secSys}
\subsection{Problem Description}
We consider a downlink multi-cell HetNet which consists of a set $\mathcal{K}:=\{1,\cdots,K\}$ of cells. Within each cell $k$ there is a set of $\mathcal{Q}_k:=\{1,\cdots,Q_k\}$ distributed BSs such as macro/micro/pico BSs which provide service to the users. Assume that in each cell $k$, there is a low-latency backhaul network connecting the set of BSs $\mathcal{Q}_k$ to a central controller (usually the macro
BS), who makes the resource allocation decisions for all BSs within the cell. The central controller has access to the data signals of all the users in its cell. Let
$\mathcal{I}_k:=\{1,\cdots, I_k\}$ denote the users associated with cell $k$. Each of the users $i_k\in\mathcal{I}_k$ is served jointly by a subset of BSs in $\mathcal{Q}_k$. Let $\mathcal{I}$ and $\cQ$ denote the set of all the users and all the BSs, respectively, and let $I=|\cI|$ and $Q=|\cQ|$. Assume that each BS has $M$ transmit antennas, and each user has $N$ receive antennas. Let $\mathbf{H}^{q_\ell}_{i_k}\in\mathbb{C}^{N\times M}$ denote the channel matrix between the $q$-th BS in the $\ell$-th cell and the
$i$-th user in the $k$-th cell. Similarly, we use $\mathbf{H}^{\ell}_{i_k}$ to denote the channel matrix between all the BSs in the $\ell$-th cell to the user $i_k$, i.e., $\mathbf{H}^{\ell}_{i_k}:=\big\{\mathbf{H}^{q_\ell}_{i_k}\big\}_{q_\ell\in\mathcal{Q}_\ell}\in
\mathbb{C}^{N\times MQ_\ell}$.

{Suppose that $d\le \min\{M,N\}$ data streams are transmitted to user $i_k$ (for notational simplicity we assume that all users have the same number of streams).} Let $\bV^{q_k}_{i_k}\in\mathbb{C}^{M\times d}$ denote the transmit
precoder that BS $q_k$ uses to transmit data
$\bs_{i_k}\in\mathbb{C}^{d}$ to user $i_k$. Define{
$$\bV_{i_k}:=\big\{\bV^{q_k}_{i_k}\big\}_{q_k\in\mathcal{Q}_k}, \;
\bV^{q_k}:=\big\{\bV^{q_k}_{i_k}\big\}_{i_k\in\mathcal{I}_k},\;  \bV^{k}:=\big\{\bV_{i_k}\big\}_{i_k\in\mathcal{I}_k}$$}
\!\!as the collection of all precoders intended for user $i_k$,
all precoders belong to BS $q_k$, and all precoders in cell $k$, respectively. Let
$\bV:=\{\bV_{i_k}\}_{i_k\in\mathcal{I}}$. Further assume that all the BSs in cell $k$ form a single virtual BS, and they jointly
transmit to user $i_k\in\cI_k$. Then $\bV_{i_k}$ can be viewed as the {\it virtual precoder} for user $i_k$. Using these definitions, we can express the transmitted signal of BS $q_k$ and the combined transmitted signal for all the BSs in cell $k$ as:
{ \begin{align}
{\bx}^{q_k}=\sum_{i_k\in\mathcal{I}_k}\bV^{q_k}_{i_k}\bs_{i_k}\in
\mathbb{C}^{M\times 1},\quad\quad
{\bx}^{k}=\sum_{i_k\in\mathcal{I}_k}\bV_{i_k}\bs_{i_k}\in\mathbb{C}^{M
Q_k\times 1}\nonumber.
\end{align}}
The received signal $\by_{i_k}\in\mathbb{C}^{N\times 1}$ of user
$i_k$ is{\small
\begin{align}
\by_{i_k}&=
\bH^{k}_{i_k}\bV_{i_k}\bs_{i_k}+\sum_{j_k\in\cI_k\setminus
i_k}\bH^{k}_{i_k}\bV_{j_k}\bs_{j_k}+{\sum_{\ell\ne
k}\sum_{j_\ell\in\mathcal{I}_\ell}\bH^{\ell}_{i_k}\bV_{j_\ell}\bs_{j_\ell}}+\mathbf{z}_{i_k}\nonumber
\end{align}}
\!\!where $\mathbf{z}_{i_k}\in\mathbb{C}^{N\times 1}$ is the complex Gaussian noise with distribution
$\mathcal{CN}(0,\sigma^2\mathbf{I}_N)$. Let $\bU_{i_k}\in\mathbb{C}^{N\times d}$ denote the
receiver used by user $i_k$ to decode the intended signal. Then the
estimated signal for user $i_k$ is:
$\widehat{\bs}_{i_k}=\bU^H_{i_k}\by_{i_k}$. The mean square error
(MSE) for user $i_k$ can be written as{
\begin{align}
\begin{split}
\bE_{i_k}&:=
\mathbb{E}[(\bs_{i_k}-\widehat{\bs}_{i_k})({\bs_{i_k}-\widehat{\bs}_{i_k}})^H]\\
&=(\bI_{d}-\bU^H_{i_k}\bH^{k}_{i_k}\bV_{i_k})(\bI_{d}-\bU^H_{i_k}\bH^{k}_{i_k}\bV_{i_k})^H\\
&\quad +\sum_{(\ell,j)\ne
(k,i)}\bU^H_{i_k}\bH^\ell_{i_k}\bV_{j_\ell}\bV^H_{j_\ell}(\bH^\ell_{i_k})^H\bU_{i_k}+\sigma^2_{i_k}\bU^H_{i_k}\bU_{i_k}.\label{eqMSE}
\end{split}
\end{align}}\hspace{-0.2cm}
The Minimum MSE (MMSE) receiver is given by \cite{verdu98}{
\begin{align}
\bU^{\rm{mmse}}_{i_k}&=\bigg(\sum_{(\ell,j)}\bH^{\ell}_{i_k}\bV_{j_\ell}\bV^H_{j_\ell}(\bH^{\ell}_{i_k})^H
+\sigma^2\bI_N\bigg)^{-1}\bH^{k}_{i_k}\bV_{i_k}\nonumber\\
&:= \bC^{-1}_{i_k}\bH^{k}_{i_k}\bV_{i_k}\label{eqDefCCapital}
\end{align}}
\!\!where $\bC_{i_k}\in\mathbb{S}^{N}_{++}$ is user $i_k$'s
received signal covariance matrix. Plugging \eqref{eqDefCCapital} into \eqref{eqMSE} we obtain
{
\begin{align}
\bE^{\rm
mmse}_{i_k}=\bI_{d}-\bV^H_{i_k}(\bH^k_{i_k})^H\bC_{i_k}^{-1}\bH^{k}_{i_k}\bV_{i_k}\succeq
0.\label{eqMMSE}
\end{align}}
Clearly we also have $\bI_{d}-\bE^{\rm mmse}_{i_k}\succeq 0$.

Let us assume that Gaussian signaling is used and the interference
is treated as noise. The achievable rate for user $i_k$ is given by
\cite{cover05}{
\begin{align}
R_{i_k}&=\log\Bigg|\mathbf{I}_{N}+\mathbf{H}^k_{i_k}\bV_{i_k}\bV^H_{i_k}({\bH}^k_{i_k})^H
\mathbf{Y}_{i_k}^{-1}\Bigg|=-\log\left|\mathbf{E}^{\rm mmse}_{i_k}\right| \label{eqRateMMSE}
\end{align}}
\!\!where we have defined the matrix $\mathbf{Y}_{i_k}:=\sum_{(\ell,j)\ne (k,i)}
\mathbf{H}^\ell_{i_k}\bV_{j_\ell}\bV^H_{j_\ell}({\bH}^{\ell}_{i_k})^H+\sigma^2\mathbf{I}_N$ as the received interference covariance matrix for user $i_k$; the last equality is a well-known relationship between the rate and the MMSE matrix (see e.g., \cite{christensen08}).  We will occasionally use the notations
$R_{i_k}(\bV)$, $\bC_{i_k}(\bV)$ and $\bE^{\rm mmse}_{i_k}(\bV)$ to make their dependencies on $\bV$ explicit.

Let $f_{i_k}(\bullet): \mathbb{R}_{+}\to\mathbb{R}$ denote the utility
function of user $i_k$'s data rate. {Let $s^{q_k}_{i_k}(\bullet)$ denote a penalty term for $\bV^{q_k}_{i_k}$, which is useful in inducing certain structures on the precoders such as sparsity or group sparsity \cite{hong12sparse}.} We assume that these functions satisfy the following assumptions:

\noindent{\bf Assumption A}.
\begin{description}
\item [\bf A-1)] $f_{i_k}(x)$ is a concave non-decreasing function in $x$ for all $x\ge 0$;
\item [\bf A-2)]  $f_{i_k}(-\log(|\mathbf{X}|))$ is convex in
$\mathbf{X}$, for all $\bI\succeq\mathbf{X}\succeq 0$;
\item [\bf A-3)] $f_{i_k}(x)$ is continuously differentiable;
\item [\bf A-4)] $s^{q_k}_{i_k}(\bV^{q_k}_{i_k})$ is a convex, continuous,
but possibly nonsmooth function.
\end{description}
Note that this family of utility functions includes well-known
utilities such as the weighted sum rate, the geometric mean of one
plus rate and the harmonic mean rate utility functions (see
\cite{shi11WMMSE_TSP}). 

We consider the general system-level sum utilities
maximization problem given below{
\begin{align}
\max_{\bV}&\quad u(\bV):=f(\bV)-s(\bV)\label{problemSyS}\tag{$\rm P_{\rm SYSTEM}$} \\
\st&\quad f(\bV):=\sum_{k\in\mathcal{K}}\sum_{i_k\in{\mathcal{I}_k}}f_{i_k}(R_{i_k}(\bV))\nonumber\\
&\quad
s(\bV):=\sum_{k\in\cK}\sum_{i_k\in\cI_k}\sum_{q_k\in\cQ_k}s^{q_k}_{i_k}(\bV^{q_k}_{i_k})\nonumber\\
 &\quad \bV^{q_k}\in\mathcal{V}^{q_k}, \
\forall~q_k\in\mathcal{Q},\nonumber\\
&\quad \bV^{k}\in\mathcal{V}^{k}, \
\forall~k\in\mathcal{K}\nonumber
\end{align}}
\!\!where $\mathcal{V}^{q_k}$ and $\mathcal{V}^{k}$ are some
feasible sets for $\bV^{q_k}$ and $\bV^{k}$, respectively. Let
$\mathcal{V}$ denote the feasible set for $\bV$.  In the next subsection we provide a few transceiver design problems arising in HetNet that are covered by the model \rm $(P_{\rm SYSTEM})$. 

\vspace{-0.3cm}
\subsection{Applications}
\subsubsection {\bf MIMO IBC/IMAC/IC channels with inter-BS CB \cite{shi2009pricingmimo,liu11MISO,christensen08,venturino10,kim11}}
Consider a network with $|\cQ_k|=1$ $\forall~k$, i.e., each cell has a single BS. In this case $s(\bV)\equiv 0$, and the constraint set $\mathcal{V}^{q_k}$ is the same as $\mathcal{V}^{k}$, which is given by the following sum-power constrained set ($\bar{P}_k$ denotes the power budget for cell $k$){
\begin{align}
\mathcal{V}^k=\bigg\{\bV^k:\sum_{i_k\in\mathcal{I}_k}\trace[\bV_{i_k}\bV_{i_k}^H]\le
\bar{P}_k\bigg\}\label{eqSumPower}.
\end{align}}

\vspace{-0.3cm}
\subsubsection  {\bf Multicell MIMO with intra-cell CoMP and inter-cell CB  \cite{Ng10, hong12sparse}}
The BSs in different cells cooperate in the precoder level, while those in the same cell perform JP. Each BS $q_k$ has a separate power budget, denoted as $\bar{P}^{q_k}$: {
\begin{align}
\mathcal{V}^{q_k}=\bigg\{\bV^{q_k}:\sum_{i_k\in\mathcal{I}_k}\trace[\bV^{q_k}_{i_k}(\bV^{q_k}_{i_k})^H]\le
\bar{P}^{q_k}\bigg\}\label{eqPerBSPower}.
\end{align}}
\!\!This model generalizes those for the IBC model (e.g., \cite{shi11WMMSE_TSP, venturino10}), in that the
sum-power constraint \eqref{eqSumPower} is replaced by the set of {\it per-group of antennas} power constraints.

\subsubsection  {\bf Multicell MIMO with intra-cell partial CoMP and inter-cell
CB\cite{hong12sparse, kim12}}   A practical alternative to full CoMP is to implement a {\it partial CoMP} strategy, where each user is served by only a few BSs in each cell. In this case, the BSs in the same cell are grouped into different (possibly overlapping) clusters with small sizes, within which they fully cooperate for joint transmission. To jointly design the BS clusters and the precoders, one needs to properly utilize the penalty term $s(\bV)$. The requirement that each user is served only by {\it a small number} of BSs translates to certain {\it block sparse} structure of $\bV_{i_k}=\{\bV^{q_k}_{i_k}\}_{q_k\in\cQ_k}$, see \cite{hong12sparse, hong12_asilomar, dai14sparse, liao13admm}. To induce such block sparsity, the penalty term can take the following form {
\begin{align}
s_{i_k}^{q_k}(\bV_{i_k}^{q_k})=\gamma^{q_k}_{i_k}\|\bV^{q_k}_{i_k}\|_F\label{eqL2Penalization},
\end{align}}
\!\!with $\gamma^{q_k}_{i_k}\ge 0$ being some constant. See \cite{hong12sparse, hong12_asilomar, dai14sparse, liao13admm} and the references therein for motivation of
using \eqref{eqL2Penalization}. It is worth noting that most of the existing decomposition based algorithms \cite{papand09, liu11MISO, scutari13decomposition, Scutari12pricing} cannot handle such nonsmooth problem.

\subsubsection  {\bf Multicell MIMO network with intra-cell ZF and inter-cell
CB \cite{zhang09, zhang10JSAC}} In this case all the BSs in the same cell
jointly perform the ZF precoding (a.k.a. the block-diagonalization precoding) \cite{Spencer04}, while the BSs in different cells perform CB. The feasible sets are given as{\small
\begin{align}
\mathcal{V}^{k}&=\big\{\bV^{k}:\bH^k_{j_k}\bV_{i_k}(\bV_{i_k})^H(\bH^k_{j_k})^H={\bf
0},\forall~j_k\ne
i_k,\ j_k,i_k\in\mathcal{I}_k\big\}\nonumber\\
\mathcal{V}^{q_k}&=\left\{\bV^{q_k}:\sum_{i_k\in\mathcal{I}_k}\trace[\bV^{q_k}_{i_k}(\bV^{q_k}_{i_k})^H]\le
\bar{P}^{q_k},
\forall~q_k\in\mathcal{Q}_k\right\}.\nonumber
\end{align}}
\!\!Despite the wide applicability of \eqref{problemSyS}, solving it to its global optimality is often difficult (i.e., NP-hard) \cite{luo08a,
liu13maxminTSP, liu11MISO,liu13spl}. The NP-hardness of the problem indicates that it is even unlikely to find an equivalent convex reformulation for it (unless P=NP). Therefore the best that one can do is to seek efficient algorithms that provide approximately optimal solutions. Unfortunately, the tight coupling of the variables
$\{\bV^k\}_{k=1}^{K}$ in the objective often makes such task challenging. We mention that existing general optimization frameworks are not directly applicable to solve \eqref{problemSyS}. For example the partial linearization approach proposed in \cite{scutari13decomposition, Scutari12pricing} cannot handle nonsmooth terms in the objective. The algorithms developed in \cite{meisam14nips, Facchinei15} are parallelization schemes that can potentially handle non-convex and nonsmooth objective. However when applied to \eqref{problemSyS}, it requires that each subproblem (say for optimizing BS $q_k$'s precoder $\bV^{q_k}$) or its approximation is easy to solve.  It is not clear how to construct an effective approximation for \eqref{problemSyS} here. Further a careful stepsize tuning is needed to enable parallel update.

\vspace{-0.2cm}
\section{A Successive Convex Approximation
Approach}\label{secSCA}

In this section, we present our main approach for computing a
high quality solution for the general problem \eqref{problemSyS}. Our approach is to
solve, possibly in an inexact manner, a series of convex subproblems each of which is a
local approximation of \eqref{problemSyS}.

{We first highlight the main steps of our subsequent development. Our first step is to derive an approximation of the objective function $u(\bV)=f(\bV)-s(\bV)$ at a given feasible point $\bhV$, denoted as $h^{\beta}(\bV; \bhV)-s(\bV)$, for any feasible $\bV$ ($\beta> 0$ is some constant parameter). The construction of the function $h^{\beta}(\bV; \bhV)$ requires a novel two-stage convex approximation. It turns out that for every fixed input argument $\bhV$, our approximation is a global lower bound of $u(\bV)$, and it is easy to optimize with respect to $\bV$. More importantly, this lower bound decouples completely among the variables, that is, we have the following decomposition structure
{{
\begin{align}\label{eq:decomposable}
h^{\beta}(\bV; \bhV)=\sum_{i_k\in\mathcal{I}}g^{\beta}_{i_k}(\bV_{i_k};\bhV),
\end{align}}}
\!for some $g^{\beta}_{i_k}(\bullet;\bhV)$ which is only a function of $\bV_{i_k}$. Our second step constructs algorithms that iteratively optimize  $h^{\beta}(\bV; \bhV)-s(\bV)$ with respect to $\bV$, while taking the previous iterate $\bhV$ as the input argument. Thanks to the decomposability structure \eqref{eq:decomposable}, the resulting computation can be completely distributed into each cell.
}

\vspace{-0.2cm}
\subsection{A Local Approximation for \eqref{problemSyS}}\label{subLowerBound}
We begin by deriving a simple local approximation for the individual
users' utility function $f_{i_k}(\bullet)$ using the convexity assumption A-2). Let $\widehat{\bV}\in\cV$ denote a
feasible solution to problem \eqref{problemSyS}. Let
$\widehat{\bE}_{i_k}:=\bE^{\rm mmse}_{i_k}(\widehat{\bV})$
denote the MMSE evaluated at $\widehat{\bV}$.  Leveraging \eqref{eqRateMMSE}, we can express $f_{i_k}(\bullet)$ as a function of
$\bE^{\rm mmse}_{i_k}$ only:{\small
\begin{align}
&f_{i_k}(R_{i_k}(\bV))=f_{i_k}(-\log|\bE^{\rm mmse}_{i_k}|)\nonumber\\
&\stackrel{\rm (i)}\ge f_{i_k}(-\log|\widehat{\bE}_{i_k}|)-\frac{\partial
f_{i_k}(x)}{\partial
x}\bigg|_{x=-\log|\widehat{\bE}_{i_k}|}\hspace{-0.5cm}\trace\left[\widehat{\bE}_{i_k}^{-1}(\bE^{\rm mmse}_{i_k}-\widehat{\bE}_{i_k})\right]\nonumber\\
&:= \widehat{a}_{i_k}-\hc_{i_k}\trace\left[\widehat{\bE}_{i_k}^{-1}(\bE^{\rm mmse}_{i_k}-\widehat{\bE}_{i_k})\right]\nonumber\\
&:=\bar{h}_{i_k}\left(\bE^{\rm mmse}_{i_k}, \bV_{i_k};
\widehat{\bE}_{i_k}, \bhV_{i_k}\right)\label{eqFirstApproximation}
\end{align}}
{\!\! where $\hc_{i_k}$ and $\widehat{a}_{i_k}$ are two constants not related to either $\bE^{\rm mmse}_{i_k}$ or $\bV$; the inequality in ${\rm (i)}$ is due to the fact that the convex function is always lower bounded by its local linear approximation. Note that $\hc_{i_k}\ge 0$ due
to the non-decreasing property of $f_{i_k}(\bullet)$'s assumed in A-1).}

In fact, from the above derivation it is readily seen that $\bar{h}_{i_k}(\bE^{\rm mmse}_{i_k}, \bV_{i_k};
\widehat{\bE}_{i_k}, \bhV_{i_k})$ is a global lower bound of $f_{i_k}(\bullet)$ at the point $(\widehat{\bE}_{i_k}, \bhV_{i_k})$, i.e., for all feasible
$(\bE^{\rm mmse}_{i_k}, \bV_{i_k})$ and feasible $(\bhE_{i_k}, \bhV_{i_k})$ we have{
\begin{align*}
\bar{h}_{i_k}(\bE^{\rm mmse}_{i_k}, \bV_{i_k};
\bhE_{i_k}, \bhV_{i_k})&\le f_{i_k}(-\log|\bE^{\rm mmse}_{i_k}|)\nonumber\\
\bar{h}_{i_k}(\bhE_{i_k}, \bhV_{i_k}; \bhE_{i_k}, \bhV_{i_k})&=f_{i_k}(-\log|\bhE_{i_k}|).
\end{align*}}
\!\!Unfortunately, such approximation does not simplify the problem, as summing $\bar{h}_{i_k}(\bullet)$ over the users still results in a non-convex non-separable function with respect to $\bV$ (cf. \eqref{eqMMSE}).

Our next step performs a {\it second-stage approximation}, in which we further approximate $\bar{h}_{i_k}(\bullet)$ by a concave function of $\bV$. To this end, we need a key lemma that explores some hidden
convexity property of the function $\bar{h}_{i_k}(\bullet)$.

\begin{lemma}\label{lemmaConvexity}
\it The following function
{$$l_{i_k}(\bV_{i_k}, \bC_{i_k}):=\trace\left[\widehat{\bE}_{i_k}^{-1}
\bV_{i_k}^H(\bH^k_{i_k})^H\bC^{-1}_{i_k}\bH^k_{i_k}\bV_{i_k}\right]$$}
is jointly convex with respect to the pair of variables $(\bV_{i_k},
\bC_{i_k})$ in the feasible region $(\mathbb{C}^{MQ_k\times
d}, \mathbb{S}^{N}_{++})$.
\end{lemma}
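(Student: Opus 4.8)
\emph{Proof plan.} The plan is to recognize the displayed quantity as a weighted trace of the matrix‑fractional function and to reduce the claim to the joint convexity of the latter. First I would clean up notation: set $\bW:=\widehat{\bE}_{i_k}^{-1}$, a fixed positive definite $d_{i_k}\times d_{i_k}$ matrix (only $\bW\succeq 0$ is needed below), and $\bA:=\bH^k_{i_k}$, a fixed $N\times MQ_k$ matrix. Using cyclic invariance of the trace, the function of interest can be rewritten as
\begin{align*}
g(\bV_{i_k},\bC_{i_k})\ :=\ \trace\!\left[\bW\,\bV_{i_k}^H\bA^H\bC_{i_k}^{-1}\bA\bV_{i_k}\right]\ =\ \trace\!\left[(\bA\bV_{i_k})^H\,\bC_{i_k}^{-1}\,(\bA\bV_{i_k})\,\bW\right].
\end{align*}
Since $\bV_{i_k}\mapsto\bA\bV_{i_k}$ is linear, it suffices to establish two facts and compose them: (a) the matrix‑valued map $\Phi(\bX,\bC):=\bX^H\bC^{-1}\bX$ is convex in the L\"owner order on $\mathbb{C}^{N\times d_{i_k}}\times\mathbb{S}^{N}_{++}$; and (b) the linear functional $\bP\mapsto\trace[\bP\bW]$ is nondecreasing in the L\"owner order. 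Indeed, (a) precomposed with the linear inner map shows $(\bV_{i_k},\bC_{i_k})\mapsto(\bA\bV_{i_k})^H\bC_{i_k}^{-1}(\bA\bV_{i_k})$ is L\"owner‑convex, and then (b), being a monotone linear functional, turns it into an ordinary convex scalar function, which is exactly $g$.

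Fact (b) is a one‑liner: if $\bP\preceq\bQ$ are Hermitian then $\trace[(\bQ-\bP)\bW]=\trace[\bW^{1/2}(\bQ-\bP)\bW^{1/2}]\ge 0$ because $\bW^{1/2}(\bQ-\bP)\bW^{1/2}\succeq 0$. Fact (a) is the only real obstacle, and I would dispatch it with the Schur‑complement characterization: for $\bC\succ 0$,
\begin{align*}
\begin{bmatrix}\bC & \bX\\ \bX^H & \bP\end{bmatrix}\succeq 0\qquad\Longleftrightarrow\qquad \bP\succeq\bX^H\bC^{-1}\bX .
\end{align*}
Pick $\lambda\in(0,1)$ and points $(\bX_1,\bC_1),(\bX_2,\bC_2)$ in the domain, and put $\bP_j:=\bX_j^H\bC_j^{-1}\bX_j$. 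By the ``$\Leftarrow$'' direction each block matrix $\bigl[\begin{smallmatrix}\bC_j & \bX_j\\ \bX_j^H & \bP_j\end{smallmatrix}\bigr]$ is positive semidefinite; the positive semidefinite cone is convex and the block matrix depends linearly on $(\bX,\bC,\bP)$, so the $\lambda$‑combination of the two block matrices is again positive semidefinite. Applying the ``$\Rightarrow$'' direction to that combination (whose $(1,1)$‑block $\lambda\bC_1+(1-\lambda)\bC_2$ is still positive definite) yields
\begin{align*}
\lambda\bP_1+(1-\lambda)\bP_2\ \succeq\ \bigl(\lambda\bX_1+(1-\lambda)\bX_2\bigr)^H\bigl(\lambda\bC_1+(1-\lambda)\bC_2\bigr)^{-1}\bigl(\lambda\bX_1+(1-\lambda)\bX_2\bigr),
\end{align*}
which is precisely the L\"owner‑convexity of $\Phi$; the endpoints $\lambda\in\{0,1\}$ are trivial.

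Putting the pieces together, $(\bA\bV_{i_k})^H\bC_{i_k}^{-1}(\bA\bV_{i_k})=\Phi(\bA\bV_{i_k},\bC_{i_k})$ is L\"owner‑convex in $(\bV_{i_k},\bC_{i_k})$ by (a) and linearity of $\bV_{i_k}\mapsto\bA\bV_{i_k}$, and hence $g=\trace[\,\cdot\,\bW]$ of it is jointly convex on $\mathbb{C}^{MQ_k\times d_{i_k}}\times\mathbb{S}^{N}_{++}$ by (b). An equivalent, more ``quote‑based'' route is to absorb the weight by writing $g=\trace\bigl[(\bA\bV_{i_k}\bW^{1/2})^H\bC_{i_k}^{-1}(\bA\bV_{i_k}\bW^{1/2})\bigr]$ and invoking directly the known joint convexity of the matrix‑fractional function $(\bX,\bC)\mapsto\trace[\bX^H\bC^{-1}\bX]$ composed with the affine map $\bV_{i_k}\mapsto\bA\bV_{i_k}\bW^{1/2}$; I would keep the Schur‑complement version as the primary argument since it is self‑contained.
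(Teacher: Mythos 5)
Your proof is correct, and it rests on the same key tool as the paper's --- the Schur-complement characterization $\bP\succeq\bX^H\bC^{-1}\bX \Leftrightarrow \bigl[\begin{smallmatrix}\bC & \bX\\ \bX^H & \bP\end{smallmatrix}\bigr]\succeq 0$ for $\bC\succ 0$ --- but it packages the argument differently. The paper absorbs the weight symmetrically as $\widehat{\bE}_{i_k}^{-1/2}(\cdot)\widehat{\bE}_{i_k}^{-1/2}$, introduces a slack matrix $\bZ_{i_k}$ to write the epigraph as the projection of an LMI-representable (hence convex) set, and concludes convexity of the scalar function from convexity of its epigraph. You instead prove L\"owner-convexity of the matrix-fractional map $(\bX,\bC)\mapsto\bX^H\bC^{-1}\bX$ directly (convex combination of the two PSD block matrices stays PSD, then Schur again), precompose with the linear map $\bV_{i_k}\mapsto\bH^k_{i_k}\bV_{i_k}$, and push the result through the L\"owner-monotone linear functional $\bP\mapsto\trace[\bP\widehat{\bE}_{i_k}^{-1}]$. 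What your route buys is a self-contained argument with no slack variable or projection step, only needing $\widehat{\bE}_{i_k}^{-1}\succeq 0$ rather than a symmetric square-root splitting, and it isolates a reusable fact (operator convexity of the matrix-fractional function); the paper's epigraph/LMI route buys an explicit semidefinite representation of the constraint set, which is the more natural statement if one intends to hand the subproblem to an SDP solver. Both are complete and rigorous; the only points worth stating explicitly in a final write-up are that convexity is meant over the real vector space underlying the complex matrices and that $\trace[\bP\widehat{\bE}_{i_k}^{-1}]$ is real for Hermitian $\bP$, both of which are immediate.
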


{The proof is given in Appendix \ref{app:Lemma1}.} The result provided by Lemma \ref{lemmaConvexity} allows us to
further approximate the following function $\bar{h}_{i_k}(\bullet)$. To describe such second-stage approximation, we define
$\widehat{\bC}_{i_k}:=\bC_{i_k}(\widehat{\bV})$, and note {\small
\begin{align}\label{eq:second_layer}
&l_{i_k}(\bV_{i_k},
\bC_{i_k})=\trace\left[\widehat{\bE}_{i_k}^{-1}
\bV_{i_k}^H(\bH^k_{i_k})^H\bC^{-1}_{i_k}\bH^k_{i_k}\bV_{i_k}\right]\nonumber\\
&\stackrel{\rm (i)}\ge \trace\left[\widehat{\bE}_{i_k}^{-1}
\bhV_{i_k}^H(\bH^k_{i_k})^H\bhC^{-1}_{i_k}\bH^k_{i_k}\bhV_{i_k}\right]\nonumber\\
&\quad +\frac{d
l_{i_k}(\widehat{\bV}_{i_k}+t(\bV_{i_k}-\widehat{\bV}_{i_k}),
\widehat{\bC}_{i_k})}{d t}\bigg|_{t=0}\nonumber\\
&\quad +\frac{d
l_{i_k}(\widehat{\bV}_{i_k},
\widehat{\bC}_{i_k}+t(\bC_{i_k}-\widehat{\bC}_{i_k}))}{d
t}\bigg|_{t=0}\nonumber\\
&\stackrel{\rm (ii)}=\trace[\bhE_{i_k}^{-1}]-d+\trace\left[\widehat{\bE}_{i_k}^{-1}\widehat{\bV}^H_{i_k}(\bH^k_{i_k})^H\widehat{\bC}_{i_k}^{-1}
\bH^k_{i_k}(\bV_{i_k}-\widehat{\bV}_{i_k})\right]\nonumber\\
&\quad +\trace\left[\widehat{\bE}_{i_k}^{-1}(\bV_{i_k}-\widehat{\bV}_{i_k})^H(\bH^k_{i_k})^H\widehat{\bC}_{i_k}^{-1}
\bH^k_{i_k}\widehat{\bV}_{i_k}\right]\nonumber\\
&\quad-
\trace\left[\widehat{\bE}_{i_k}^{-1}\widehat{\bV}^H_{i_k}(\bH^k_{i_k})^H\widehat{\bC}_{i_k}^{-1}(\bC_{i_k}-\widehat{\bC}_{i_k})
\widehat{\bC}_{i_k}^{-1}\bH^k_{i_k}\widehat{\bV}_{i_k}\right]
\end{align}}
\!\!{where $\rm (i)$ is due to the convexity of $l_{i_k}(\bullet)$ derived in Lemma \ref{lemmaConvexity}; $\rm (ii)$ simply uses the definition of directional derivatives; see Appendix \ref{appDirectional} for detailed derivation.} The above inequality combined
with the definition in \eqref{eqFirstApproximation} yields:{\small
\begin{align}
&\bar{h}_{i_k}(\bE^{\rm mmse}_{i_k}, \bV_{i_k};
\bhE_{i_k}, \bhV_{i_k})\nonumber\\
&\stackrel{\rm
(i)}=\widehat{a}_{i_k}+\hc_{i_k}\trace\left[\widehat{\bE}_{i_k}^{-1}
\bV_{i_k}^H(\bH^k_{i_k})^H\bC^{-1}_{i_k}\bH^k_{i_k}\bV_{i_k}\right]+\hc_{i_k}d-\hc_{i_k}\trace[\bhE_{i_k}^{-1}]\nonumber\\
&\stackrel{\rm (ii)}\ge \widehat{a}_{i_k}+
\hc_{i_k}\trace\big[\widehat{\bE}_{i_k}^{-1}\widehat{\bV}^H_{i_k}(\bH^k_{i_k})^H\widehat{\bC}_{i_k}^{-1}
\bH^k_{i_k}(\bV_{i_k}-\widehat{\bV}_{i_k})\big]\nonumber\\
&\quad +\hc_{i_k}\trace\big[\widehat{\bE}_{i_k}^{-1}(\bV_{i_k}-\widehat{\bV}_{i_k})^H(\bH^k_{i_k})^H\widehat{\bC}_{i_k}^{-1}
\bH^k_{i_k}\widehat{\bV}_{i_k}\big]\nonumber\\
&\quad-
\hc_{i_k}\trace\left[\widehat{\bE}_{i_k}^{-1}\widehat{\bV}^H_{i_k}(\bH^k_{i_k})^H\widehat{\bC}_{i_k}^{-1}(\bC_{i_k}-\widehat{\bC}_{i_k})
\widehat{\bC}_{i_k}^{-1}\bH^k_{i_k}\widehat{\bV}_{i_k}\right]\nonumber\\
&\quad -\beta\left\|\bV_{i_k}-\bhV_{i_k}\right\|^2_F\nonumber\\
&\stackrel{\rm (iii)}=\tilde{a}_{i_k}+
\hc_{i_k}\trace\bigg[\widehat{\bE}_{i_k}^{-1}\bigg(\widehat{\bU}^H_{i_k}
\bH^k_{i_k}\bV_{i_k}+\bV_{i_k}^H(\bH^k_{i_k})^H\widehat{\bU}_{i_k}\nonumber\\
&\quad -\widehat{\bU}^H_{i_k}
\big(\sum_{(\ell,j)}\bH^{\ell}_{i_k}\bV_{j_\ell}(\bV_{j_\ell})^H(\bH^{\ell}_{i_k})^H\big)
\widehat{\bU}_{i_k}\bigg)\bigg]\nonumber\\
&\quad -\beta\trace\left[\bV_{i_k}\bV^H_{i_k}-\bV_{i_k}\bhV^H_{i_k}-\bhV_{i_k}\bV^H_{i_k}\right]:= h^{\beta}_{i_k}(\bV; \bhV)\label{eqDefh}
\end{align}}
\!\!where $\rm (i)$ is due to the definition of the MMSE matrix
\eqref{eqMMSE}; $\rm(ii)$ uses the inequality \eqref{eq:second_layer} and the fact that for any $\beta\ge 0$, the proximal term $-\beta\left\|\bV_{i_k}-\bhV_{i_k}\right\|^2_F$ is always nonnegative; in $\rm (iii)$, we have defined
$\widehat{\bU}_{i_k}:=\bhC^{-1}_{i_k}\bH^k_{i_k}\bhV_{i_k}$ and used the definition of $\bC_{i_k}$ in \eqref{eqDefCCapital};
$\tilde{a}_{i_k}$ collects all the terms that are not dependent
on $\bV$. It is interesting to observe that the newly defined quantity $\widehat{\bU}_{i_k}$ is in fact the MMSE receiver for user $i_k$ when the system precoder is given by $\bhV$ (cf. \eqref{eqDefCCapital}).

{We note that in the approximation \eqref{eqDefh} a proximal term $-\beta\left\|\bV_{i_k}-\bhV_{i_k}\right\|^2_F$ has been introduced, where $\beta\ge 0$ is some nonnegative constant. The reasons for introducing such term are twofold. First, it makes each approximation function $h^{\beta}_{i_k}$ strongly convex over $\bV_{i_k}$, therefore easier to optimize (with a unique solution). Second, it helps to construct an algorithm with provable convergence guarantee. The latter point will be clarified shortly in Remark \ref{remark:proximal}.

}

Combining \eqref{eqFirstApproximation} and \eqref{eqDefh}, we see that $h^{\beta}_{i_k}(\bV; \bhV)$ is a local approximation of $f_{i_k}(R_{i_k}(\bV))$, which satisfies the following two properties:{
\begin{align}
&{h}^{\beta}_{i_k}(\bhV; \bhV)=f_{i_k}(R_{i_k}(\bhV))\nonumber\\
&{h}^{\beta}_{i_k}(\bV; \bhV)\le
f_{i_k}(R_{i_k}(\bV)), \; \forall~\bhV, \bV\in \cV\label{eqHLowerBound}.
\end{align}}
\!\!Let us define  $h^{\beta}(\bV;
\bhV):=\sum_{i_k\in\mathcal{I}}{h}^{\beta}_{i_k}(\bV; \bhV)$. Then we readily have {\small
\begin{align}
\begin{split}\label{eqLowerBoundPropertyInequality}
&{h}^{\beta}(\bhV; \bhV)=f(\bhV), \quad\quad\quad
{h}^{\beta}(\bV; \bhV)\le f(\bV), \; \forall~\bhV, \bV\in \cV, \\
&{h}^{\beta}(\bhV; \bhV) - s(\bV)=u(\bhV),
{h}^{\beta}(\bV; \bhV)-s(\bhV)\le u(\bV), \; \forall~\bhV, \bV\in \cV.
\end{split}
\end{align}}
\!\!That is, the function $h^{\beta}(\bV; \bhV)$ is a locally tight, {\it
global} lower bound for the sum-utility function $f(\bV)$. A
direct consequence of this observation is that the function
$h^{\beta}(\bV;\bhV)-s(\bV)$ is a universal lower bound for $u(\bV)$.

Another key property of $h^{\beta}(\bV; \bhV)$ is that it is a {\it quadratic} function {\it separable} over all variables $\{\bV_{i_k}\}$. To see this, we can expand $h^{\beta}(\bV;
\bhV)$ to obtain \eqref{eqHExpanding},
\begin{table*}[t]
\normalsize{ \vspace{-0.2cm}
{{\small
\begin{align}
h^{\beta}(\bV; \bhV)&=\sum_{i_k\in\mathcal{I}}\Bigg(\tilde{a}_{i_k}
+\trace\Bigg[2\bigg(\hc_{i_k}\widehat{\bE}_{i_k}^{-1}\widehat{\bU}^H_{i_k}
\bH^k_{i_k}+\beta\bhV^H_{i_k}\bigg)\bV_{i_k}-\bV_{i_k}^H\bhJ^k
\bV_{i_k}\Bigg]\Bigg):=\sum_{i_k\in\mathcal{I}}g^{\beta}_{i_k}(\bV_{i_k};\bhV)\label{eqHExpanding}.
\end{align}}}
\rule{\linewidth}{0.2mm}}
\vspace{-0.8cm}
\end{table*}
where we have defined{\small
\begin{align}
\bhJ^k&:=\sum_{j_l\in\mathcal{I}}\hc_{j_l}(\bH^{k}_{j_l})^H\bhU_{j_l}\bhE_{j_l}^{-1}\bhU^H_{j_l}\bH^{k}_{j_l}+\beta\bI_{MQ_k}\in\mathbb{S}_{++}^{MQ_k}\label{eqDefJ}.
\end{align}}
Clearly the approximation function  $h^{\beta}(\bV;\bhV)$ consists of $|\cI|$ component functions, each of which is a function of a {\it single} variable $\bV_{i_k}$ (user $i_k$'s precoder).

Let us briefly summarize what we have so far. Our two-stage approximation not only convexifies the objective
function $u(\bV)$, but more importantly {\it decomposes} the
nonlinearly coupled objective $u(\bV)$. This property of the
sum-utility function $u(\bV)$ will be leveraged heavily for designing efficient distributed
algorithms.
{
\begin{remark}
{\it When $\beta=0$, the first-stage approximation expressed in \eqref{eqFirstApproximation} is well-known and has been used in existing works such as \cite{shi11WMMSE_TSP} \cite{christensen08}; also see \cite[Section VIII-A]{Razaviyayn12SUM} for discussion. However, none of those works has articulated the fact that there exists a separable quadratic function of $\bV$ in the form of \eqref{eqHExpanding} which serves as a global lower bound for the original system utility; cf. \eqref{eqLowerBoundPropertyInequality}. The latter property has been made clear in our derivation precisely due to the use of the {\it two-stage} approximation.}\hfill
$\square$
\end{remark}
}

{
\begin{remark}
{\it The main message here is that by performing the two-stage approximation, the resulting $h^{\beta}(\bV; \bhV)$, which can be viewed as an alternative representation of the original objective $u(\bV)$, completely decomposes across the optimization variable $\bV$. Such decomposability is intrinsic to the problem and independent of any algorithm to be used. Our method is therefore philosophically different from the general schemes discussed in \cite{scutari13decomposition, Scutari12pricing, meisam14nips,Facchinei15}, where {\it parallelization} is made possible by appropriate choices of algorithms and their parameters. \hfill
$\square$}
\end{remark}
}

\vspace{-0.3cm}
\subsection{The Successive Convex Approximation Algorithm}\label{subAlgorithm}

In this subsection, we  develop two algorithms and analyze their properties.

Our approach is to successively approximate $f(\bV)$ using
$h^{\beta}(\bV;\bhV)$ to obtain progressively improved solutions. Let us use
$(t)$ to denote the iteration index. The first algorithm, referred to as the {\it successive convex approximation (SCA)} algorithm, consists of three main steps; also see Fig. \ref{figSCA} for a graphical illustration.

{\bf Step 1}: Suppose that $\bV(t-1)$ is a feasible solution to
\eqref{problemSyS}. At iteration $t$, the following convex
optimization problem is solved to obtain $\bV(t)${\small
\begin{subequations}
\begin{align}
\max_{\bV}&\quad
h^{\beta}(\bV;\bV(t-1))-s(\bV)\label{problemLowerBoundGeneral}\tag{${\rm P}_{\rm Lower-Bound}$} \\
\st&\quad \bV^{q_k}\in\mathcal{V}^{q_k},
\forall~q_k\in\mathcal{Q}, \quad\bV^{k}\in\mathcal{V}^{k}, \forall~k\in\mathcal{K}\nonumber.
\end{align}\end{subequations}}
{\bf Step 2}: For each user $i_k\in\cI$, compute{
\begin{subequations}
\begin{align}
\bC_{i_k}(t)&=\sum_{(\ell,j)}\bH^{\ell}_{i_k}\bV_{j_\ell}(t)\bV^H_{j_\ell}(t)(\bH^{\ell}_{i_k})^H
+\sigma^2\bI_N\label{eqCUpdate}\\
{\bU}_{i_k}(t)&=\bC^{-1}_{i_k}(t)\bH^k_{i_k}\bV_{i_k}(t)\label{eqUUpdate}\\
\bE_{i_k}(t)&=\bI_{d}-\bV^H_{i_k}(t)(\bH^k_{i_k})^H\bC^{-1}_{i_k}(t)\bH^k_{i_k}\bV_{i_k}(t)\\
c_{i_k}(t)&=\frac{\partial f_{i_k}(x)}{\partial
x}\big|_{x=-\log|{\bE}_{i_k}(t)|}\label{eqcUpdate}\\
\bJ^k(t)&=\sum_{j_l\in\mathcal{I}}c_{j_l}(t)(\bH^{k}_{j_l})^H\bU_{j_l}(t)\bE^{-1}_{j_l}(t)\bU^H_{j_l}(t)\bH^{k}_{j_l}+\beta\bI_{MQ_k}\label{eqJUpdate}.
\end{align}\end{subequations}}
{\bf Step 3}: Form the updated function
$h^{\beta}(\bV;\bV(t))$ according to \eqref{eqHExpanding}. Let $t=t+1$,
go to {\bf Step 1}.

{The key step in the SCA is to solve the lower-bound maximization subproblem \eqref{problemLowerBoundGeneral} to global optimality. In practice, solving \eqref{problemLowerBoundGeneral} can be computationally expensive. Therefore it is desirable to inexactly optimize the subproblem \eqref{problemLowerBoundGeneral}. The following alternative, termed as {\it inexact successive convex approximation} (In-SCA) algorithm, replaces Step 1 of the SCA by the following step:

{{\bf Step 1 of In-SCA}: Suppose $\bV(t-1)$ is feasible to
\eqref{problemSyS}. At iteration $t$, approximately solve the subproblem \eqref{problemLowerBoundGeneral} to obtain $\bV(t)$, which satisfies conditions \eqref{eq:condition_inexact}.}}
\begin{table*}[t]
\normalsize{ \vspace{-0.5cm}
{
\begin{subequations}\label{eq:condition_inexact}
\begin{align}
&\left(h^{\beta}(\bV(t);\bV(t-1))-s(\bV(t))\right)-\left(h^{\beta}(\bV(t-1);\bV(t-1))-s(\bV(t-1))\right)\ge \eta(t) \|\bV(t-1)-\bV(t)\|^2_F \label{eq:condition_inexact1}
\\
&\; \mbox{if}\quad \bV(t)=\bV(t-1), \quad \mbox{then}\quad \bV(t-1)=\arg\max_{\bV}\quad
h^{\beta}(\bV;\bV(t-1))-s(\bV).\label{eq:condition_inexact2}
\end{align}
\end{subequations}
}
\rule{\linewidth}{0.2mm} }
\vspace{-0.5cm}
\end{table*}

Here $\eta(t)>\eta>0$ is some iteration dependent constant bounded away from zero. The first condition \eqref{eq:condition_inexact1} requires that the new precoder should {\it sufficiently} improve the lower bound, while the second condition \eqref{eq:condition_inexact2} restricts the way in which the iterates should be generated by the algorithm. More specifically \eqref{eq:condition_inexact2} says that if the iteration stops, then a fixed point of the SCA algorithm must have been reached. Note that the In-SCA algorithm is rather a family of algorithms that generates the precoders satisfying the conditions \eqref{eq:condition_inexact}.

Next we analyze the convergence properties of the two proposed algorithms. To proceed, the following definitions are needed:
\begin{itemize}
\item \textbf{Directional derivative:} Let $f: \mathcal{V} \rightarrow \mathbb{R} $ be a
function where~$\mathcal{V}$ is a convex set. The directional
derivative of~$f$ at point~$x\in\cV$ in direction~$v$ is defined by{
\[
f_v'(x) := \liminf_{r \downarrow 0} \frac{f(x+r v) - f(x)}{
r}.
\]}
\item \textbf{Stationary points of a function:} Let $f: \mathcal{V} \rightarrow \mathbb{R} $
where~$\mathcal{V}$ is a convex set.  The point~$x$ is a stationary
point of~$f(\bullet)$ if $f_v'(x) \leq 0$ for all $v$ such that $x + v
\in \cV$.

%
\end{itemize}

      \begin{figure*}[ht]
    \begin{minipage}[t]{0.4\linewidth}\vspace*{-0.3cm}
    \centering
     {\includegraphics[width=
1\linewidth]{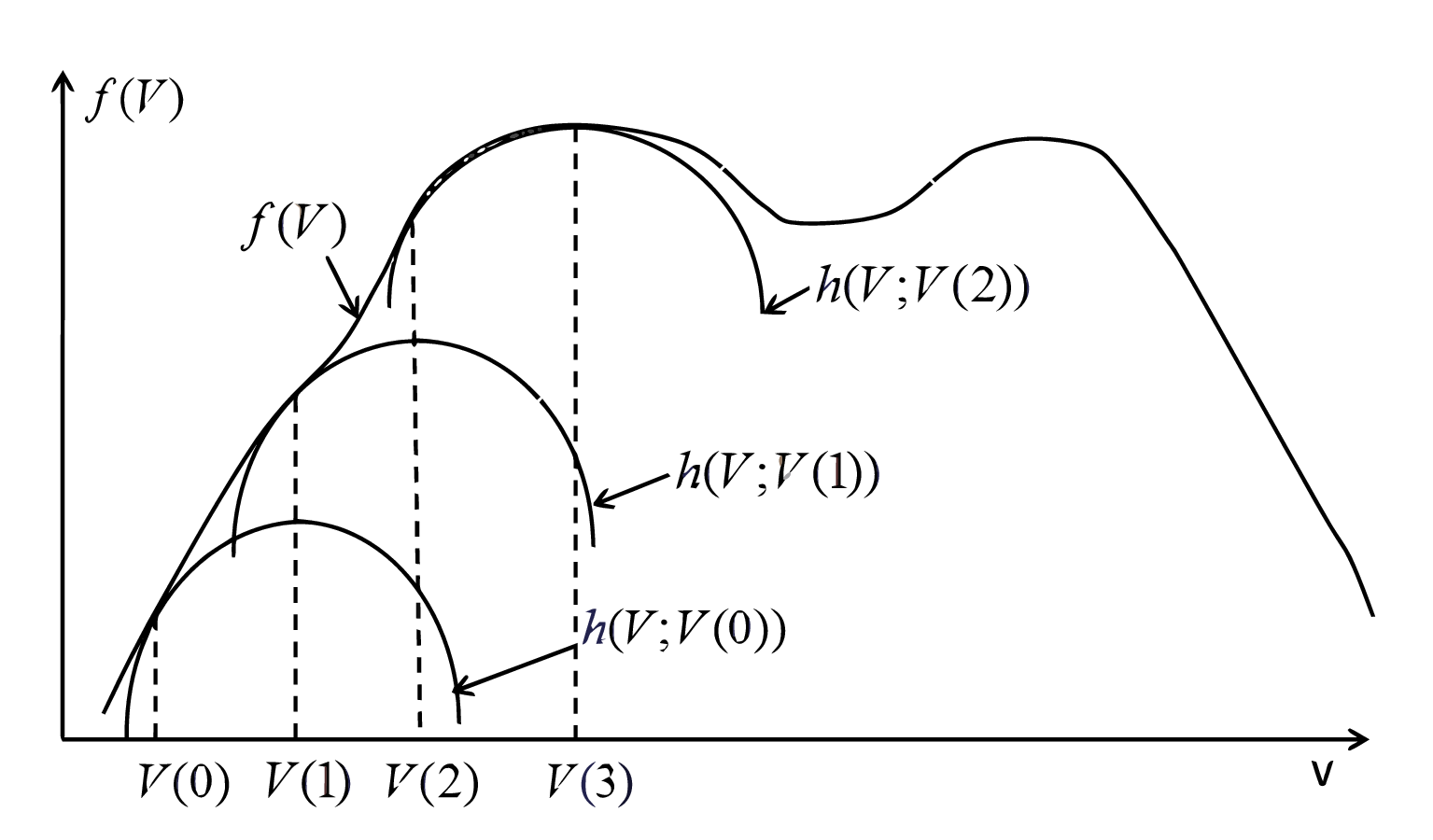}\vspace*{-0.1cm}}
\end{minipage}\hfill
    \begin{minipage}[t]{0.4\linewidth}\vspace*{-0.3cm}
    \centering
    {\includegraphics[width=
1\linewidth]{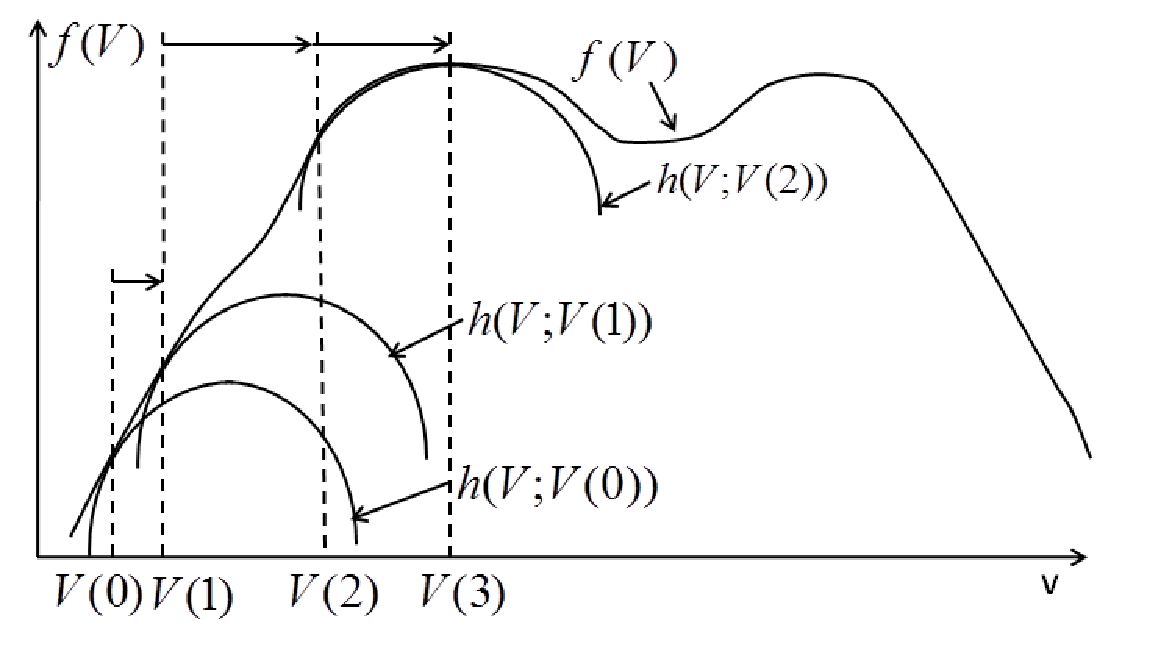} }
\end{minipage}
\caption{\footnotesize A graphical illustration of the proposed algorithms. Left: The SCA Algorithm. Right: The In-SCA Algorithm.}\label{figSCA}
\vspace{-0.5cm}
    \end{figure*}

We have the following convergence theorem. The proof is given in Appendix \ref{appTheorem1}.
\begin{thm}\label{thmGeneralConvergence}
\it Suppose the following conditions hold:
\begin{enumerate}
\item Assumptions A-1)-- A-4) are all satisfied;
\item The sets
$\{\mathcal{V}^k\}_{k\in\cK}$ and
$\{\mathcal{V}^{q_k}\}_{q_k\in\mathcal{Q}}$ are all convex, closed
and compact.
\end{enumerate}
{Then the precoders $\{\bV(t)\}$ generated by either the SCA or the In-SCA globally converge to the set of stationary solutions of problem \eqref{problemSyS}.}
\end{thm}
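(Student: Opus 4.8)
\emph{Proof plan.} The argument is of the successive upper-bound/MM type and rests on two structural facts already established: (a) the surrogate $h^{\beta}(\bV;\bhV)-s(\bV)$ is a global lower bound for $u(\bV)$ that is tight at $\bV=\bhV$ (cf.\ \eqref{eqLowerBoundPropertyInequality}); and (b) it is concave and separable in $\bV$ and jointly continuous in $(\bV,\bhV)$ on $\cV\times\cV$ (the coefficients in \eqref{eqHExpanding}--\eqref{eqDefJ} depend continuously on $\bhV$, since $\bC_{i_k}(\bhV)\succeq\sigma^{2}_{i_k}\bI\succ0$ and $\bE^{\rm mmse}_{i_k}(\bhV)\succ0$ throughout $\cV$, and $f_{i_k}$ is $C^{1}$ by A-3). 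I would first record a \emph{first-order consistency} property: for every feasible $\bhV$ and every direction $\bD$ with $\bhV+\bD\in\cV$,
\[
\bigl(h^{\beta}(\cdot;\bhV)-s\bigr)'_{\bD}(\bhV)=u'_{\bD}(\bhV).
\]
Indeed, each of the two linearizations used to build $h^{\beta}$ (the linearization of $f_{i_k}(-\log|\cdot|)$ in $\bE^{\rm mmse}_{i_k}$, and the linearization of the jointly convex function $l_{i_k}$ of Lemma~\ref{lemmaConvexity} in $(\bV_{i_k},\bC_{i_k})$) is tangent to the replaced function at the expansion point, the proximal term $\beta^{k}\|\bV_{i_k}-\bhV_{i_k}\|_F^{2}$ has zero gradient there, and $-s$ is carried over unchanged; a chain rule computation through the dependence $\bC_{i_k}=\bC_{i_k}(\bV)$ then yields $\nabla_{\bV}h^{\beta}(\bhV;\bhV)=\nabla_{\bV}f(\bhV)$, hence the displayed identity.

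\emph{Step 1 (monotone improvement, vanishing successive difference).} For either algorithm,
\[
u(\bV(t))\ \ge\ h^{\beta}(\bV(t);\bV(t-1))-s(\bV(t))\ \ge\ h^{\beta}(\bV(t-1);\bV(t-1))-s(\bV(t-1))\ =\ u(\bV(t-1)),
\]
where the first inequality is (a), the second is optimality of $\bV(t)$ in \eqref{problemLowerBoundGeneral} for SCA and the first relation in \eqref{eq:condition_inexact} for In-SCA, and the equality is tightness. Thus $\{u(\bV(t))\}$ is nondecreasing and, by compactness of $\cV$ and continuity of $u$, bounded above, hence convergent; in particular $u(\bV(t))-u(\bV(t-1))\to0$. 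For In-SCA the first relation in \eqref{eq:condition_inexact} then forces $\|\bV(t)-\bV(t-1)\|_F\to0$; for SCA with $\beta^{k}>0$ the surrogate is strongly concave and the same conclusion follows from the displayed chain (for $\beta^{k}=0$ one argues as in the next step with a further subsequence, which does not need this).

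\emph{Step 2 (accumulation points solve the surrogate problem) and Step 3 (stationarity).} Pick an accumulation point $\bV^{\star}$ of $\{\bV(t)\}$ (one exists by compactness), say $\bV(t_n)\to\bV^{\star}$; then $\bV(t_n+1)\to\bV^{\star}$ as well. For SCA, $\bV(t_n+1)$ maximizes $h^{\beta}(\cdot;\bV(t_n))-s$ over $\cV$, and letting $n\to\infty$ in $h^{\beta}(\bV(t_n+1);\bV(t_n))-s(\bV(t_n+1))\ge h^{\beta}(\bV;\bV(t_n))-s(\bV)$, using joint continuity (b) and continuity of $s$, gives $\bV^{\star}\in\arg\max_{\bV\in\cV}h^{\beta}(\bV;\bV^{\star})-s(\bV)$. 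For In-SCA the second relation in \eqref{eq:condition_inexact} together with the first shows $\bV(t)=\bV(t-1)$ precisely when $\bV(t-1)$ solves its surrogate problem, and since the iterates are asymptotically fixed (Step~1) one propagates this to $\bV^{\star}$. In both cases $\bV^{\star}$ maximizes the concave function $h^{\beta}(\cdot;\bV^{\star})-s$ over the convex set $\cV$, so $\bigl(h^{\beta}(\cdot;\bV^{\star})-s\bigr)'_{\bD}(\bV^{\star})\le0$ for every feasible direction $\bD$; by first-order consistency, $u'_{\bD}(\bV^{\star})\le0$ for all such $\bD$, i.e.\ $\bV^{\star}$ is stationary for \eqref{problemSyS}. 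Hence every accumulation point is a stationary solution; when such points at the common limiting objective value are isolated, $\|\bV(t)-\bV(t-1)\|_F\to0$ upgrades this to convergence of the entire sequence.

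\emph{Expected main obstacle.} Two points require the most care. The first is the first-order consistency identity: although it is "only" a gradient computation, it must be carried through the two nested linearizations and the implicit dependence $\bC_{i_k}=\bC_{i_k}(\bV)$, and it is exactly what converts a maximizer of the surrogate into a stationary point of \eqref{problemSyS}. The second, and the genuine crux for the In-SCA case, is the limiting step in Step~2: turning the two pointwise conditions in \eqref{eq:condition_inexact} into a statement about accumulation points, which needs a closedness/continuity property of the inexact update (the concrete In-SCA instances of Section~\ref{secCustomizedAlgorithm} being constructed so that this holds).
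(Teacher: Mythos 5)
Your proposal is correct and follows essentially the same route as the paper's proof: monotone ascent via the tight global lower bound \eqref{eqLowerBoundPropertyInequality}, convergence of the utility values, a subsequence argument showing every accumulation point maximizes its own surrogate $h^{\beta}(\cdot;\bV^{\star})-s(\cdot)$, and first-order (directional-derivative) consistency of the surrogate with $f$ to conclude stationarity --- the consistency property you derive by tangency is exactly what the paper imports from \cite[Lemma 1]{Razaviyayn12SUM}, and your In-SCA treatment (sufficient ascent $\Rightarrow$ vanishing successive differences, then the second condition in \eqref{eq:condition_inexact} at the limit) matches the paper's Appendix \ref{appTheorem1}, including the limiting step you flag as the delicate point.
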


{
It is useful to compare the SCA and In-SCA with various existing approaches for parallel optimization. Essentially, SCA falls in the BSUM framework \cite{Razaviyayn12SUM,hong15busmm_spm}, but with a {\it single} block variable $\bV$. To see this, recall that ${h}^{\beta}(\bV; \bhV)-s(\bV)$ is a globally tight lower bound of the utility function $u(\bV)$ (cf. \eqref{eqLowerBoundPropertyInequality}), therefore it satisfies the condition A1-A4 in \cite{Razaviyayn12SUM}. In this sense, the SCA can also be viewed as closely related to the {FLEXA (Flexible Parallel Algorithm)} and the {PSCA (Parallel Successive Convex Approximation)} algorithms in \cite{meisam14nips, Facchinei15}, again with a {\it single} block variable. The difference with PSCA and FLEXA is that the decomposability of our algorithms is an intrinsic feature of the approximation function ${h}^{\beta}(\bV; \bhV)-s(\bV)$, not a result of algorithm design. Therefore our algorithms are stepsize-free, while both FLEXA and PSCA requires careful stepsize tuning.
}

{
Here we note that the In-SCA algorithm possesses the same convergence guarantee (to the set of stationary solutions) as the SCA algorithm, despite the fact that its subproblems are solved inexactly. Also note that in Step 1 of both algorithms there are $K$ independent subproblems, one for each cell $k$. This is a result of the decomposability of the approximation function $h^{\beta}(\bullet)$. Clearly, efficiently solving these per-cell problems is the key to the low-complexity implementation of our algorithm. In the next section we shall develop customized algorithms  tailored for the per-cell problems of different applications.
}

\section{Customized Algorithms for Precoder
Design}\label{secCustomizedAlgorithm}
In this section, we customize the SCA and the In-SCA algorithms to different network settings.
\subsection{Linear Precoder Design for IBC Model}\label{subIBCLinear}
First let us consider the IBC model in which there is a single BS in each cell.  As each cell consists of a single BS, $\bV_{i_k}$ is used to denote the transmit precoder intended for user $i_k$. In this case, $s(\bV)\equiv 0$, and the sets $\cV^{q_k}$ and
$\cV^k$ collapse to a single set, cf. \eqref{eqSumPower}. The subproblem \eqref{problemLowerBoundGeneral} is
given by{
\begin{align}
\max_{\{\bV^k\}_{k\in\cK}}&\quad
\sum_{k\in\mathcal{K}}\sum_{i_k\in{\mathcal{I}_k}}g^{\beta}_{i_k}(\bV_{i_k};
\bhV)\label{problemIBC}\tag{${\rm P}_{\rm IBC}$}, \nonumber\\
\st&\quad \sum_{i_k\in\mathcal{I}_k}\trace(\bV_{i_k}\bV^H_{i_k})\le
\bar{P}_k,~k\in\mathcal{K}\nonumber.
\end{align}}
\!\!Clearly both the constraints and the objective of the above problem
are separable among the BSs. Therefore we can decompose this problem into $K$
independent subproblems of the form{
\begin{align}
\max_{\bV^k}&\quad \sum_{i_k\in{\mathcal{I}_k}}g^{\beta}_{i_k}(\bV_{i_k},\bhV)\label{problemIBC-SUB}\tag{${\rm P}_{\rm IBC-SUB}$},\nonumber\\
\st&\quad \sum_{i_k\in\mathcal{I}_k}\trace\left[\bV_{i_k}\bV^H_{i_k}\right]\le
\bar{P}_k\nonumber.
\end{align}}
\!\!Let $\lambda^k\ge 0$ denote the Lagrangian multiplier associated
with the power constraint. Then the Lagrangian function for problem
\eqref{problemIBC-SUB} is given by \eqref{eq:Lagrangian}.

\begin{table*}[t]
{
\begin{align}
&L_k(\bV^k,\lambda^k;\bhV)=\sum_{i_k\in\mathcal{I}_k}
\bigg(\trace\bigg[2(\hc_{i_k}\widehat{\bE}_{i_k}^{-1}\widehat{\bU}^H_{i_k}
\bH^k_{i_k}+\beta\bhV^H_{i_k})\bV_{i_k}-\bV_{i_k}^H\bhJ^k\bV_{i_k}\bigg]\bigg)-\lambda^k\bigg(\sum_{i_k\in\cI_k}\trace\left[\bV_{i_k}\bV^H_{i_k}\right]-\bar{P}_k\bigg).\label{eq:Lagrangian}
\end{align}
\rule{\linewidth}{0.2mm} }
\vspace{-0.8cm}
\end{table*}

{
Clearly the Slater Condition holds, so the optimal primal-dual pair $((\mathbf{V}^k)^*,(\lambda^k)^*)$
satisfy the KKT optimality conditions{
\begin{subequations}\begin{align}
&\bV^{k^*}=\arg\max_{\bV^k} L_k(\bV^k,\lambda^{k*};\bhV),\label{eqLagrangian0}\quad \lambda^{k*}\ge 0\nonumber\\ &\bar{P}_k-\sum_{i_k\in\cI_k}\trace[\bV^*_{i_k}\bV^{*H}_{i_k}]\ge
0,\\
&\lambda^{k*}\left(\bar{P}_k-\sum_{i_k\in\cI_k}\trace[\bV^*_{i_k}\bV^{*H}_{i_k})]\right)=0\label{eqComplementarity}.
\end{align}\end{subequations}}}
\!\!For fixed $\lambda^k\ge 0$, the solution for the unconstrained
problem $\max_{\mathbf{V}^k}L(\bV^k,\lambda^k;\bhV)$, denoted as
$\{\bV_{i_k}^*(\lambda^k)\}_{i_k\in\cI_k}$, can be expressed as{\small
\begin{align}
\bV_{i_k}^*(\lambda^k)&=\left(\bhJ^k+\lambda^k\bI_{M}\right)^{-1}\left(\hc_{i_k}(\bH^k_{i_k})^H
\widehat{\bU}_{i_k}\widehat{\bE}_{i_k}^{-1}+\beta \bhV_{i_k}\right)
,\; \forall~i_k. \label{eqIBCComputeV}
\end{align}}
\!\!To find the optimal multiplier $(\lambda^k)^*$ that satisfies the
complementarity condition \eqref{eqComplementarity}, we utilize a
result on penalty method for optimization,  e.g., \cite[Section 12.1, Lemma 1]{Luenberger984}.  This result asserts
that for the solution $\{\bV_{i_k}^*(\lambda^k)\}_{i_k\in\cI_k}$,
the penalized term
$\sum_{i_k\in\cI_k}\trace[\bV^*_{i_k}(\lambda^k)(\bV^*_{i_k}(\lambda^k))^{H}]$
must be monotonically decreasing with respect to $\lambda^k$. It follows that we can find the optimal multiplier
by a simple bisection search procedure; {see e.g., \cite[Section 3.3.1]{hong12survey} for details on such procedure.}

The algorithm discussed in this subsection is summarized in Table
\ref{tableIBC}. {We can verify that the overall per-iteration computational complexity is given by (assuming $d=N$)
$$\mathcal{O}(I^2 NM^2+ I M^3+ IM^2N + I N^2M +I N^3).$$
}
\begin{table}[htb]
\begin{center}
\vspace{-0.1cm} \caption{ SCA for
\eqref{problemSyS} in IBC setting} \label{tableIBC} {
\begin{tabular}{|l|}
\hline
S1): {\bf Initialization} Obtain a feasible solution $\bV_{i_k}(0)$ for all $i_k$\\
S2): For each $k$, compute $\bC_{i_k}(t),\bE_{i_k}(t),\bU_{i_k}(t),c_{i_k}(t)$,\\
\quad\quad according to \eqref{eqCUpdate}--\eqref{eqcUpdate}, for
all $i_k\in\cI_k$\\
S3): For each $k$, compute $\bJ^k(t)$ by \eqref{eqJUpdate};\\
\quad~~ For each $k$, and $i_k\in\cI_k$, compute $\bV^k(t)$ by\\
\quad\quad
{$\bV_{i_k}(t)=\left(\bJ^k(t)+(\lambda^k)^{*}\bI_M\right)^{-1}$}\\
\quad\quad\quad\quad\quad{ $\times\left(c_{i_k}(t)(\bH^k_{i_k})^H
{\bU}_{i_k}(t){\bE}^{-1}_{i_k}(t)+\beta\bV_{i_k}(t-1)\right)$}.\\
\quad \quad where $(\lambda^k)^*$ is computed by a bisection procedure\\
S4) Until some stopping criterion is met\\
  \hline
\end{tabular}}
\vspace{-0.5cm}
\end{center}
\end{table}


\begin{remark}\label{remarkWMMSE}
{\it When we set $\beta=0$, the algorithm listed in Table \ref{tableIBC} recovers the WMMSE algorithm proposed in \cite{shi11WMMSE_TSP}. It is interesting to understand the difference between the WMMSE approach and the SCA approach studied here. Reference \cite{shi11WMMSE_TSP} starts with a SUM problem with the objective  $u(\bV)$. Then it uses a ``dimension lifting" approach that adds two extra variables $\bW$ and $\bU$, and considers a different problem $\min\; \bar{u}(\bU,\bW,\bV)$. The authors show that these two problems are equivalent, so they apply a block coordinate descent (BCD) algorithm to solve the reformulated three-block problem. However, it is by no means clear what is the intuition behind {\it adding} these two variables. Our SCA framework well explains this question: $\bW$ is in fact the coefficients derived from the first-stage  {linear approximation}, while $\bU$ is the coefficients derived from the second-stage {convex approximation}. Note that in this paper we arrive at the WMMSE by specializing the SCA algorithm to the IBC setting. Thus the SCA algorithm is more general and covers the WMMSE as a special case.} \hfill $\square$
\end{remark}

%
\vspace{-0.3cm}
\subsection{Intra-cell ZF and Inter-cell CB
for IBC Model}\label{subIBCZF}

Consider an IBC model in which each BS employs a ZF precoder to cancel the intra-cell
interference. We assume that certain user selection
within each cell has already taken place, so that the following zero-forcing constraint is always feasible
{\begin{align}
\bH^k_{j_k}\bV_{i_k}\bV_{i_k}^H(\bH^k_{j_k})^H={\bf
0},\forall~j_k\ne i_k, \ j_k\in\mathcal{I}_k.
\end{align}}
\!\!Note that in this network setting, the inter-cell interference is still present, despite the fact that the intra-cell interference is canceled by the use of ZF precoder. Therefore the original problem \eqref{problemSyS} is
still difficult to solve. To apply the SCA algorithm, we first specialize the subproblem
\eqref{problemLowerBoundGeneral} to the following {\begin{subequations}
\begin{align}
\max_{\bV}&\quad
\sum_{k\in\mathcal{K}}\sum_{i_k\in{\mathcal{I}_k}}g^{\beta}_{i_k}(\bV_{i_k};
\bhV)\label{problemIBC-ZF1}\tag{${\rm P_{\rm IBC-ZF1}}$} \\
\st&\quad
\sum_{i_k\in\mathcal{I}_k}\trace\left[\bV_{i_k}\bV_{i_k}^H\right]\le
\bar{P}_k,~k\in\mathcal{K}\\
&\quad\bH^k_{j_k}\bV_{i_k}\bV_{i_k}^H(\bH^k_{j_k})^H={\bf
0},\forall~j_k\ne i_k, \ j_k\in\mathcal{I}_k \label{eqZFConstraint}.
\end{align}\end{subequations}}
\!\!To remove the ZF constraints \eqref{eqZFConstraint}, let
$L:= N(|\cI_k|-1)$, and define new channel
matrices as:{
\begin{align}
\bQ_{i_k}:=\{\bH^k_{j_k}\}_{j_k\in\cI_k\setminus\{i_k\}}\in\mathbb{C}^{L\times
M}, \ \forall~i_k\in\cI_k.\label{eqDefQ}
\end{align}}
\!\!Let $\bQ_{i_k}=\bL_{i_k}\bfSigma_{i_k}\bR^H_{i_k}$ denote the singular
value decomposition of $\bQ_{i_k}$, where $\bL_{i_k} \in \mathbb{C}^{L\times L}$ and $\bR_{i_k} \in \mathbb{C}^{M\times L}$ are two unitary matrices, and $\bfSigma_{i_k} \in \mathbb{R}^{L\times L}$ is a nonnegative diagonal matrix. Define
$\bP_{i_k}:=(\bI-\bR_{i_k}\bR^H_{i_k})$ as a projection matrix to
the space orthogonal to the one spanned by $\bR_{i_k}$. Let
$\bP_{i_k}=\btR_{i_k}\btR^H_{i_k}$, where
$\btR_{i_k}\in\mathbb{C}^{M\times (M-L)}$ is composed of the
orthogonal basis that satisfies $\bR^H_{i_k}\btR_{i_k}={\bf 0}$ and
$\btR^H_{i_k}\btR_{i_k}=\bI$. Then \cite[Lemma 3.1]{zhang10JSAC}
asserts that the optimal solution of problem \eqref{problemIBC-ZF1} must be
of the form: $\bV_{i_k}=\btR_{i_k}\bW_{i_k}$, with
$\bW_{i_k}\in\mathbb{C}^{(M-L)\times d}$. The structure of the optimal solution implies problem
\eqref{problemIBC-ZF1} can be equivalently written as{
\begin{subequations}
\begin{align}
\max_{\bW}&\quad
\sum_{k\in\mathcal{K}}\sum_{i_k\in{\mathcal{I}_k}}\tg^{\beta}_{i_k}(\bW_{i_k};
\bhV)\label{problemIBC-ZF2}\tag{$\rm P_{\rm
IBC-ZF2}$}\nonumber\\
\st&\quad
\sum_{i_k\in\mathcal{I}_k}\trace\left[\btR_{i_k}\bW_{i_k}(\btR_{i_k}\bW_{i_k})^H\right]\le
\bar{P}_k,~k\in\mathcal{K}\nonumber,
\end{align}\end{subequations}}
\!\!where the function $\tg^{\beta}_{i_k}(\bullet)$ is the same as the original
objective $g^{\beta}_{i_k}(\bV_{i_k};\bhV)$, except that $\bV_{i_k}$ is
replaced by $\btR_{i_k}\bW_{i_k}$. Once again both the constraints and the objective are
separable among the BSs, so the problem further decomposes into $K$ independent subproblems: {
\begin{subequations}
\begin{align}
\max_{\bW^k}&\quad \sum_{i_k\in{\mathcal{I}_k}}\tg^{\beta}_{i_k}(\bW_{i_k};
\bhV)\label{problemIBC-ZF-SUB}\tag{${\rm P}_{
\rm IBC-ZF-SUB}$}\nonumber\\
\st&\quad
\sum_{i_k\in\mathcal{I}_k}\trace\left[\btR_{i_k}\bW_{i_k}(\btR_{i_k}\bW_{i_k})^H\right]\le
\bar{P}_k\nonumber.
\end{align}\end{subequations}}
\!\!Let us use $\lambda_k$ to denote the Lagrangian multiplier
associated with the power constraint. Then by using similar steps
leading to \eqref{eqIBCComputeV}, we can show that the optimal
solution for problem \eqref{problemIBC-ZF-SUB} is of the form{\small
\begin{align}
\bW_{i_k}^*&=\left(\sum_{(\ell,j)}\hc_{j_\ell}(\bH^{k}_{j_\ell}\btR_{i_k})^H\bhU_{\ell_j}\bhE_{j_\ell}^{-1}\bhU^H_{\ell_j}\bH^{k}_{j_\ell}\btR_{i_k}+\lambda^*_k\bI_{M-L}\right)^{-1}\nonumber\\
&\quad \times \btR^H_{i_k}\left(\hc_{i_k}(\bH^k_{i_k})^H
\widehat{\bU}_{i_k}\widehat{\bE}_{i_k}^{-1}+\beta\bhV_{i_k}\right),\label{eqIBCComputeW}
\end{align}}
\!\!where $\lambda^*_k$ can be computed by the bisection method.
The algorithm is summarized in Table \ref{tableIBC-ZF}.

\begin{table}[htb]
\begin{center}
\vspace{-0.1cm} \caption{ SCA for
\eqref{problemSyS} in IBC setting with intra-cell ZF and inter-cell
CB } \label{tableIBC-ZF} {
\begin{tabular}{|l|}
\hline
S1): {\bf Initialization}\\
\quad\quad For each $k$ and $i_k$, compute:  \\
\quad\quad \quad $\bQ_{i_k}=\bL_{i_k}\bfSigma_{i_k}\bR^H_{i_k}$ \\
\quad\quad \quad $\bP_{i_k}=(\bI-\bR_{i_k}\bR^H_{i_k})$}, {$\bP_{i_k}=\btR_{i_k}\btR^H_{i_k}$; \\
\quad\quad Obtain a feasible solution $\bV_{i_k}(0)$ for all $i_k$;\\

S2): For each BS $k$, compute $\bC_{i_k}(t),\bE_{i_k}(t),\bU_{i_k}(t),c_{i_k}(t)$,\\
\quad\quad according to \eqref{eqCUpdate}--\eqref{eqcUpdate}, for
all $i_k$; \\

S3): Compute $\bW_{i_k}(t)$ according to \eqref{eqIBCComputeW}.\\
\quad \quad Let { $\bV_{i_k}(t)=\btR_{i_k}\bW_{i_k}(t)$};\\
S4) Until
some stopping criterion is met.\\
  \hline
\end{tabular}}
\vspace{-0.5cm}
\end{center}
\end{table}

\vspace{-0.2cm}
\subsection{Linear Precoder Design for HetNet with Intra-Cell Full CoMP and Inter-Cell
CB}\label{subVIBC} Consider a HetNet setting in which there are a set of $\cQ_k$ BSs in each cell, and they form a single {\it
virtual} BS to transmit to the users. In this case, $\cV^{q_k}$ is given by \eqref{eqPerBSPower}, i.e., each BS $q_k$ has its own power constraint. {This scenario also covers the multi-cell IBC scenario with per group of antenna power constraints, see e.g., \cite{yu07perantenna}}.

Assume for now that there is no penalty term $s(\bV)$.  Then the subproblem \eqref{problemLowerBoundGeneral}
again decomposes into $K$ independent subproblems:{
\begin{align}
\max_{\bV^k}&\quad\sum_{i_k\in{\mathcal{I}_k}}g^{\beta}_{i_k}(\bV_{i_k},\bhV)\label{problemVIBC-SUB}\tag{${\rm P}_{\rm VIBC-SUB}$}\nonumber\\
\st&\quad
\sum_{i_k\in\mathcal{I}_k}\trace[\bV^{q_k}_{i_k}(\bV^{q_k}_{i_k})^H]\le
\bar{P}^{q_k},\ \forall~q_k\in \cQ_k. \nonumber
\end{align}}
\!\!Differently from problem \eqref{problemIBC-SUB}, the above problem has $Q_k=|\cQ_k|$ separable
constraints (each constraining a subset of variables), hence $Q_k$
Lagrangian multipliers $\{\lambda^{q_k}_k\}_{q_k\in\cQ_k}$. Therefore the
bisection algorithm on a single multiplier no longer works.

Fortunately, the constraints for this problem are separable among
different block variables $\{\bV^{q_k}\}_{q_k\in\cQ_k}$. This leads to a block coordinate descent (BCD) algorithm (see \cite{tseng09, tseng01}), in which one block variable $\bV^{q_k}$ is updated at a time while holding the remaining block variables
fixed. To capitalize the block structure of the problem, the following definitions are helpful.
Let{\small
\begin{align}
\hspace{-0.2cm}\bhS_{i_k}&:=
\hc_{i_k}(\bH^k_{i_k})^H\bhU_{i_k}\bhE_{i_k}^{-1}+\beta\bhV_{i_k}\in\mathbb{C}^{MQ_k\times
d_k},\ \forall~i_k\in\mathcal{I}_k.\label{eqDefS}
\end{align}}\hspace{-0.2cm}
Partition $\bhJ^k$ (cf. \eqref{eqDefJ}) and
$\bhS_{i_k}$ into the following form{\small
\begin{align}
\bhJ^k&=\left[ \begin{array}{lll}
\bhJ^k[1,1],&\cdots,&\bhJ^k[1,Q_k]\\
\vdots&\ddots&\vdots\\
\bhJ^k[Q_k,1]&\cdots&\bhJ^k[Q_k,Q_k]
\end{array}\right],\nonumber\\
\bhS_{i_k}&=\left[\bhS_{i_k}^H[1],\cdots,\bhS_{i_k}^H[Q_k]\right]^H\label{eqPartition}
\end{align}}\hspace{-0.2cm}
where $\bhJ^k[q,p]\in\mathbb{C}^{M\times M},\
\forall~(q,p)\in\mathcal{Q}_k\times\mathcal{Q}_k$, and
$\bhS_{i_k}[q]\in\mathbb{C}^{M\times d},
\forall~q\in\mathcal{Q}_k$. Then the function $g^{\beta}_{i_k}(\bV_{i_k};
\bhV)$ defined in \eqref{eqHExpanding} can be alternatively expressed as{
\begin{align}
g^{\beta}_{i_k}(\bV_{i_k};
\bhV)&=\widetilde{a}_{i_k}+\sum_{p_k\in\cQ_k}2\trace\left[\bhS^H_{i_k}[p_k]\bV^{p_k}_{i_k}\right]\nonumber\\
&\quad-\sum_{p_k, q_k\in
\cQ_k}\trace\left[(\bV^{q_k}_{i_k})^H\bhJ^k[q_k,p_k]\bV^{p_k}_{i_k}\right]\label{eqGikHetNet}.
\end{align}}
\!\!Now it becomes clear that the objective function of problem \eqref{problemVIBC-SUB}, $\sum_{i_k\in\cI_k}g^{\beta}_{i_k}(\bV_{i_k}; \bhV)$, is a quadratic function with respect to $\bV^{m_k}$, the precoder used by BS $m_k\in\cQ_k$. It follows that the per-block problem, written in the
following form, can be efficiently solved in closed form for each block $m_k\in \cQ_k${
\begin{align}
\max_{\bV^{m_k}}&\sum_{i_k\in{\mathcal{I}_k}}g^{\beta}_{i_k}(\bV_{i_k};\bhV)\label{problemVIBC-BLK}\tag{$\rm P_{\rm VIBC-BLK}$}\nonumber\\
\st&\quad
\sum_{i_k\in\mathcal{I}_k}\trace[\bV^{m_k}_{i_k}(\bV^{m_k}_{i_k})^H]\le
\bar{P}^{m_k}.\nonumber
\end{align}}
Let $\lambda^{m_k}\ge 0$ denote the Lagrangian multiplier associated
with the power constraint of the $m_k$-th subproblem. Following the
same derivation in Section \ref{subIBCLinear}, the optimal solution
$\bV^{m_k*}$ for problem \eqref{problemVIBC-BLK} can be
expressed as{\small
\begin{align}
\bV_{i_k}^{m_k *} &=\left(\bhJ^k[m_k,m_k]+
\lambda^{m_k*}\bI_M\right)^{-1}\nonumber\\
&\times\bigg(\bhS_{i_k}[m_k]-\sum_{p_k\ne
m_k}\bhJ^k[m_k,p_k]\bV^{p_k}_{i_k}\bigg), \
\forall~i_k\in\cI_k\label{eqVIBCComputeV}
\end{align}}
\!\!where the optimal multiplier can be computed again using a bisection search.

The above observation leads to a natural two-layer algorithm: {\it i)} the outer layer updates $\bE_{i_k}(t)$,
$\bU_{i_k}(t)$, $c_{i_k}(t)$, $\bJ^k(t)$ and $\bS^k(t)$; {\it ii)}
the inner layer updates each $\bV^k$ by a BCD algorithm with
blocks given by $\{\bV^{q_k}\}_{q_k\in\cQ_k}$. See Table \ref{tableVIBC} for the detailed description. {Note that in the table $\beta>0$ is some proximal parameter introduced to regularize the iterates}. Once again the convergence of the algorithm is a direct consequence of Theorem \ref{thmGeneralConvergence}.

\begin{table}[htb]
\begin{center}
\vspace{-0.3cm} \caption{SCA for Solving
\eqref{problemSyS} with intra-cell ComP and inter-cell CB}
\label{tableVIBC} {
\begin{tabular}{|l|}
\hline
S1): {\bf Initialization} Obtain a feasible solution $\bV_{i_k}(0)$, $\forall~i_k$\\
S2): For each BS $k$, compute $\bC_{i_k}(t),\bE_{i_k}(t),\bU_{i_k}(t),c_{i_k}(t)$,\\
\quad\quad according to \eqref{eqCUpdate}--\eqref{eqcUpdate}, for
all $i_k\in\cI_k$\\
S3): For each BS $k$ and $i_k\in\mathcal{I}_k$, \\
compute $\bJ^k(t)$ by \eqref{eqJUpdate}; \\
\quad\quad $\bS_{i_k}(t)=
c_{i_k}(t)(\bH^k_{i_k})^H\bU_{i_k}(t)\bE^{-1}_{i_k}(t)+\beta\bV_{i_k}(t-1).$ \\
S4): For each BS $k$, compute the precoders $\bV^k(t)$ by\\
\quad\quad {\bf Repeat} Cyclically pick $m_k\in\cQ_k$\\
 \quad\quad\quad
Compute $\bV^{m_k*}_{i_k}$ using \eqref{eqVIBCComputeV},
$\forall~i_k\in\cI_k$,\\
\quad \quad \quad where $\lambda^{m_k*}$ is computed by a bisection procedure\\
\quad\quad {\bf Until} An optimal solution of \eqref{problemVIBC-SUB}  is obtained.\\
\quad\quad Let $\bV^{m_k}_{i_k}(t)=\bV^{m_k*}_{i_k}$, $\forall~i_k, m_k$\\
 S5) Until
some stopping criteria is met.\\
  \hline
\end{tabular}}
\vspace{-0.5cm}
\end{center}
\end{table}

The key question here is whether in Step S4) one needs to solve the inner problem \eqref{problemVIBC-SUB} {\it exactly} before we can update the outer layer. There are several drawbacks with this approach:
\begin{enumerate}
\item  It is usually difficult to check whether the inner iteration
has indeed reached the optimality;

\item Before reaching the optimality for the subproblem
\eqref{problemVIBC-SUB}, the marginal benefit of the precoder
updates in the inner iteration decreases as the iteration
progresses. This effect is manifested at the first few
outer iterations, in which even the inner problem is solved exactly,
the precoders obtained are still far away from the optimal ones.
\end{enumerate}

Surprisingly, by utilizing the In-SCA algorithm, a {\it single} inner BCD iteration is sufficient to guarantee the convergence of the overall algorithm.   The benefit of such {\it inexact} algorithm is quite obvious from our preceding discussion. It allows one to solve the subproblems approximately at the beginning, and more accurately later as the iteration progresses. To be more specific, all the steps of the resulting algorithm will be the same as before, except for Step S4), where the inexact algorithm only requires each block be picked at least once; see Table \ref{tableVIBC2}. {It can be verified that the overall per-iteration complexity is given by (assuming $d=N$)
$$ \mathcal{O}\left(Q I (M^3+M^2N+QM^2 N) + I^2 NM^2+I N^3+I MN^2\right).$$}

\begin{table}[htb]
\begin{center}
\vspace{-0.3cm} \caption{ In-SCA for Solving
\eqref{problemSyS} with intra-cell CoMP and inter-cell CB}
\label{tableVIBC2} {
\begin{tabular}{|l|}
\hline
S1): {\bf Initialization} Obtain a feasible solution $\bV_{i_k}(0)$ for all $i_k$\\
S2): For each BS $k$, compute $\bC_{i_k}(t),\bE_{i_k}(t),\bU_{i_k}(t),c_{i_k}(t)$,\\
\quad\quad according to \eqref{eqCUpdate}--\eqref{eqcUpdate}, for
all $i_k\in\cI_k$\\
S3): For each BS $k$, compute $\bJ^k(t)$ by \eqref{eqJUpdate}; \\
\quad~~~For each BS $k$, compute \\
\quad\quad $\bS_{i_k}(t)=
c_{i_k}(t)(\bH^k_{i_k})^H\bU_{i_k}(t)\bE^{-1}_{i_k}(t)$\\
\quad\quad\quad\quad\quad$+\beta\bV_{i_k}(t-1), \ \forall~i_k\in\mathcal{I}_k.$ \\S4): For each BS $k$, compute the precoders $\bV^k(t)$ by\\
\quad\quad {\bf Repeat} Cyclically pick $m_k\in\cQ_k$\\
 \quad\quad\quad
Compute $\bV^{m_k*}_{i_k}$ using \eqref{eqVIBCComputeV},
$\forall~i_k\in\cI_k$,\\
\quad \quad \quad where $\lambda^{m_k*}$ is computed by a bisection procedure\\
\quad\quad {\bf Until} Each $m_k$ is picked at least once\\
\quad\quad Let $\bV^{m_k}_{i_k}(t)=\bV^{m_k*}_{i_k}$, $\forall~i_k, m_k$\\
 S5) Until
some stopping criteria is met.\\
  \hline
\end{tabular}}
\vspace{-0.5cm}
\end{center}
\end{table}

{Next we show the convergence of the above inexact algorithm, by appealing to Theorem \ref{thmGeneralConvergence}. The proof is given in Appendix \ref{eq:appC}.
\begin{coro}\label{cor:equivalence}
{\it Consider problem \eqref{problemSyS} specialized to the setting given in this section. Suppose $\beta>0$, then the inexact algorithm in Table \ref{tableVIBC2} converges to the set of stationary solutions. }
\end{coro}
{
\begin{remark}\label{remark:proximal}
The proof of Corollary \ref{cor:equivalence} hinges on the fact that each $\bV^{m_k}$ problem is strongly convex, cf. \eqref{eq:strong:convexity}. This is precisely the reason that we have introduced the proximal term $-\beta\left\|\bV_{i_k}-\bhV_{i_k}\right\|^2_F$ in the first place. As long as $\beta$ is bounded away from zero, the lower bound function $\sum_{i_k\in\cI_k}g^{\beta}_{i_k}(\bV_{i_k}; \bhV)$ is strongly convex with respect to each $\bV^{m_k}$, providing the desired sufficient descent property \eqref{eq:condition_inexact1}. \hfill $\square$%
\end{remark}}

{
\begin{remark}\label{remarkPartialComP}
The algorithms proposed in this
section can be easily extended to the case of per-cell partial CoMP. Assume that the BS clustering structure is known, and we let $\cS^{q_k}\subseteq \cI_k$ denote the set of users served by BS $q_k$. Then we only need to slightly modify the algorithm in Table \ref{tableVIBC} and \ref{tableVIBC2} by the following:
\begin{itemize}
\item  In S1), for each BS $m_k\in\cQ_k$, set
$\bV_{i_k}^{m_k}(0)={\bf 0}$ for all $i_k\notin\cS^{m_k}$;
\item In S4), let each BS $m_k\in\cQ_k$ compute
$\bV_{i_k}^{m_k}(t)$ using \eqref{eqVIBCComputeV},
$\forall~i_k\in\cS^{m_k}$ (rather than using all $i_k\in\cI_k$).
\end{itemize}
In this way, only the precoders of the subset of users served by each BS will be updated at each iteration. Once again, the update at each iteration is closed-form, while in related works such as \cite{Ng10}, general purpose convex solvers are required.

Moreover, when the BSs' clustering structure needs to be designed jointly with the precoders, we can include appropriate penalty term $s(\bV)$
into the objective. Specifically, the per-block subproblem \eqref{problemVIBC-BLK} takes the following form{
\begin{align}
\max_{\bV^{m_k}}&\quad \sum_{i_k\in{\mathcal{I}_k}}g^{\beta}_{i_k}(\bV_{i_k},\bhV)-\sum_{i_k\in\cI_k}s^{m_k}_{i_k}(\bV^{m_k}_{i_k})\nonumber\\
\st&\quad
\sum_{i_k\in\mathcal{I}_k}\trace\left[\bV^{m_k}_{i_k}(\bV^{m_k}_{i_k})^H\right]\le
\bar{P}^{m_k}.\nonumber
\end{align}}
\!\!When we let $s^{m_k}_{i_k}(\bV^{m_k}_{i_k})=\gamma_{i_k}^{m_k}\|\bV^{m_k}_{i_k}\|_F$,
this subproblem becomes a well known quadratic group-LASSO problem
\cite{yuan06} (with an additional quadratic constraint), which can
be solved using a bisection method; see \cite{hong12sparse, hong12_asilomar} for details.
\hfill $\square$
\end{remark}}


\begin{remark}
For the problem discussed in this section, the In-SCA can take other forms as well. For example one can utilize the BSCA proposed in \cite{Razaviyayn12SUM}, or the CGD method proposed in \cite{tseng09coordiate,tseng09}. All that is needed is to verify that the conditions \eqref{eq:condition_inexact} are satisfied. However, it appears that the BCD-based In-SCA proposed in Table \ref{tableVIBC2} takes a much simpler form, and it is much easier to implement and analyze.
\hfill $\square$
\end{remark}

\section{Numerical Results}\label{secSimulation}

In this section we conduct experiments to validate the
effectiveness of the proposed algorithms. We consider three main settings: 1)
Multicell downlink linear precoder design (i.e., the IBC model); 2)
HetNet downlink linear precoder design with inter-cell CB and
intra-cell JP; 3) HetNet downlink joint clustering and linear
precoder design with intra-cell partial CoMP.

The general setup for the experiments are given as follows. {We consider a multi-cell network of up to $10$ hexagonal cells in a square grid}. The distance of the centers of two adjacent cells is set to be $500$ meters (representing a HetNet with densely deployed cells). Both the BSs and the users are
randomly placed in each cell. Let $y^{q_\ell}_{i_k}$ denote the distance between BS $q_\ell$ and user $i_k$. The channel coefficients between user $i_k$ and BS $q_\ell$ are modeled as zero mean circularly symmetric complex Gaussian vector with $\left({200}/{y^{q_\ell}_{i_k}}\right)^{3}L^{q_\ell}_{i_k}$ as variance for both real and imaginary dimensions, where $10\log_{10}(L^{q_\ell}_{i_k})\sim\mathcal{N}(0,64)$ is a real Gaussian random variable modeling the shadowing effect \footnote{The choice of the channel parameters follow those provided in \cite{liao13admm, hong12sparse, venturino10}. Note that commonly accepted standard deviation of shadowing is between $6$dB and $12$dB, and our choice is $8$dB. }. We set the noise power $\sigma^2=1$, and uniformly randomly generate the power budget $\bar{P}^{q_k}\in (0,\;{\rm P}^{\rm tot}_k/|\cQ_{k}|]$ for all $q_k\in \cQ_k$, where the ${\rm P}^{\rm tot}_k>0$ represents an upper bound of the power budget for cell $k$.

The stopping criteria are chosen as follows. The single time-scale
algorithm (i.e. the In-SCA in Table \ref{tableVIBC2}, or the SCA described in Table \ref{tableIBC}, \ref{tableIBC-ZF}) as well as the outer loop of the double time-scale
algorithm (i.e., the SCA described in Table \ref{tableVIBC}) stop when $\frac{|u(t+1)-u(t)|}{|u(t)|}\le 10^{-3}$. The
inner loop of the two-time scale algorithm stops while the relative
increase of the objective value for the related subproblem (i.e.,
problem \eqref{problemVIBC-BLK}) is less than $10^{-3}$ after
performing one round of update by all the BSs in the cell.


\vspace{-0.3cm}
\subsection{HetNet and Multicell Downlink Setting}
In this section, the performance of the following algorithms is
compared:
\begin{enumerate}
\item {\bf WMMSE} \cite{shi11WMMSE_TSP}: The algorithm described in Table \ref{tableIBC} with $\beta=0$.

\item {\bf SCA for HetNet}: The two time-scale
algorithm given in Table \ref{tableVIBC}, with the difference that the inner problem Step S4) is solved using general purpose solvers. 

\item {\bf In-SCA for HetNet}: The inexact
algorithm described in Table \ref{tableVIBC2}, where each block is updated {\it only once} in Step S4).

\item {\bf ZF-SCA for IBC}: The intra-cell ZF plus
inter-cell CB algorithm described in Table \ref{tableIBC-ZF};

\item {\bf Per-Cell ZF for HetNet} \cite{zhang10JSAC}:  Algorithm 2 proposed in \cite{zhang10JSAC}, which
performs intra-cell ZF for the HetNet setting with BS power
constraint, while ignoring the inter-cell interference.
\end{enumerate}

All algorithms considered in this subsection use the sum rate utility. The plots to be shown
represent the averaged performance over
$100$ independent network generations.

Our first set of experiments compare the performance of the first
three algorithms listed above. In Fig. \ref{figRateIBCCellSize}--Fig.
\ref{figTimeIBCCellSized3}, the averaged system sum rate, the averaged CPU time and the total number of iterations used by different algorithms are compared for a network with $|\cQ_k|=6$, $M=5$, $N=3$,
$|\cI_k|=10$ and $P^{\rm tot}_k=20$dB.  For the WMMSE, the per-BS power constraint is completely ignored.
Instead, a {\it single} per-cell power budget is assumed. Several
interesting observations can be made. First, in the HetNet setting the
In-SCA is much more efficient than the SCA. Second, the In-SCA uses almost the same number of iterations as the SCA algorithm, which implies that they also require similar amount of message exchanges among the cells.  Third, our experiments suggest that In-SCA is not quite sensitive to the choice of $\beta$. It even works well when $\beta=0$, although this result is not displayed due to its similarity to the cases when $\beta=\{0.01, 0.05\}$.
\begin{figure}
     {\includegraphics[width=
1\linewidth]{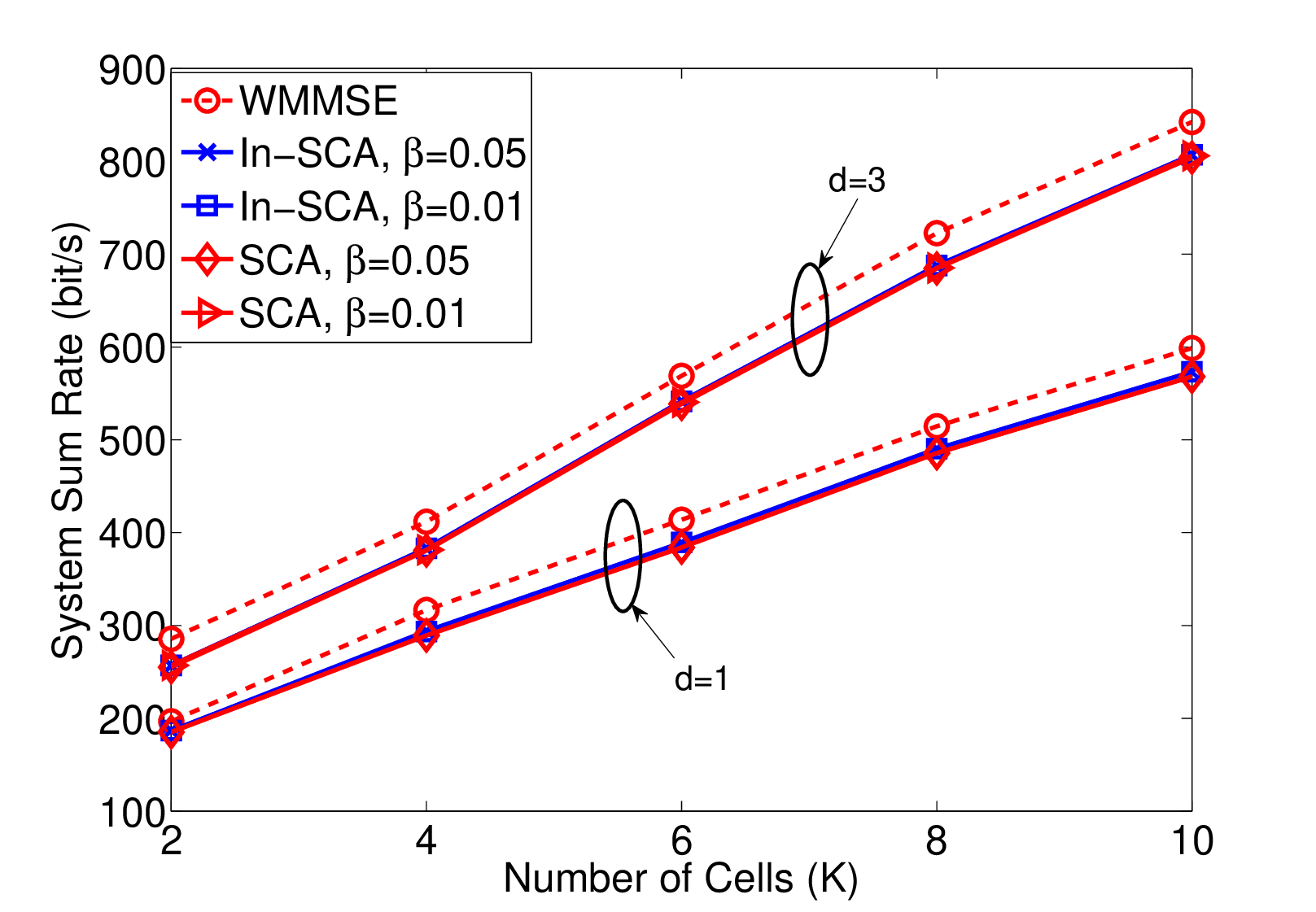}
\vspace{-0.3cm}
\caption{\footnotesize Comparison of the system throughput of
different algorithms in HetNet setting. $K=[2,4,6,8,10]$, $P^{\rm tot}_k=20$dB, $|\cQ_k|=6$,
$|\cI_k|=10$, $M=5$, $N=3$, $d=1$ or $d=3$ for all
$i_k\in\cI$.}\label{figRateIBCCellSize}}\vspace{-0.5cm}
    \end{figure}

      \begin{figure*}[htb]\vspace{-0.6cm}
      \begin{center}
{\subfigure[][Comparing number of iterations]{\resizebox{.48\textwidth}{!}{\includegraphics{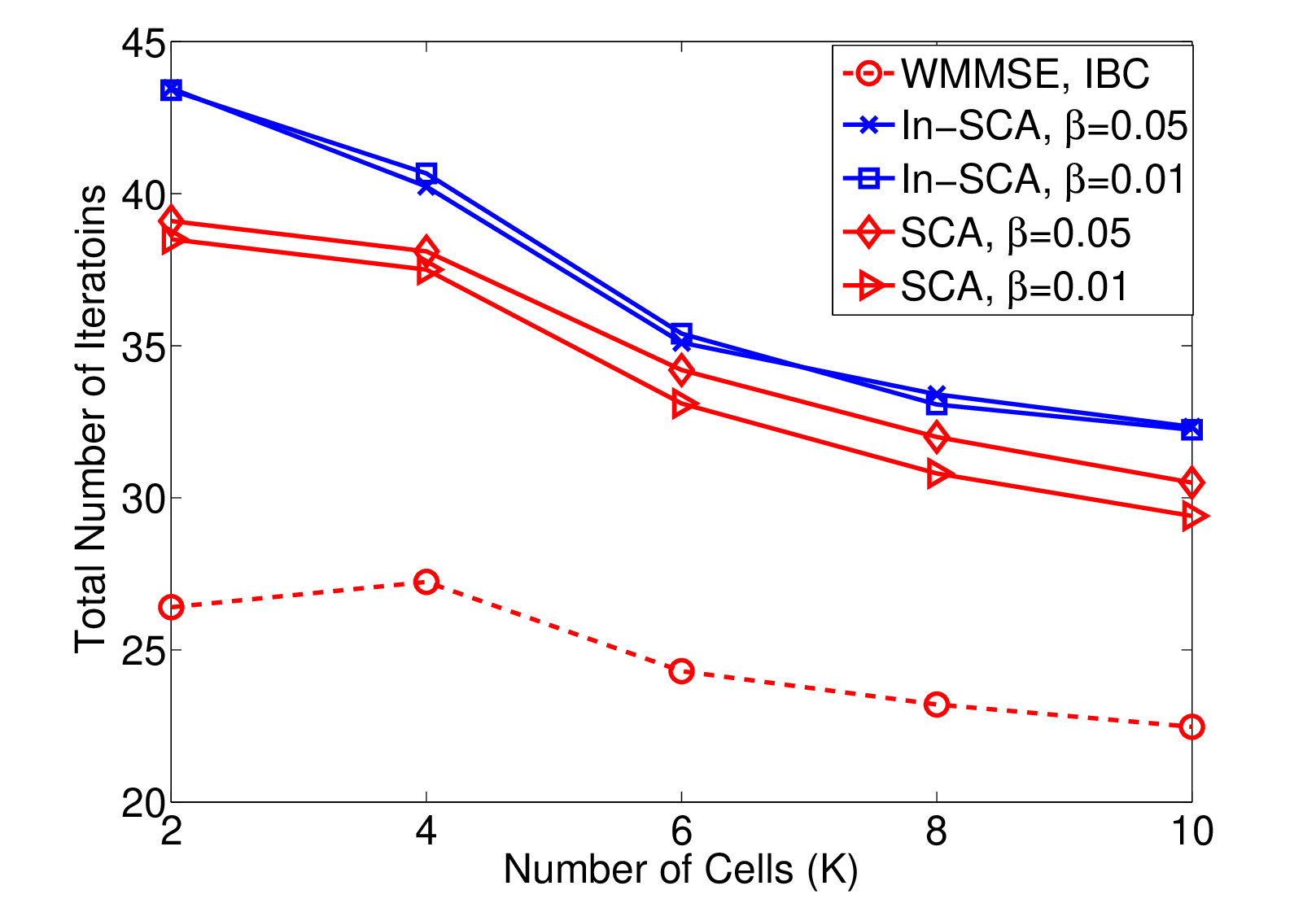}}}}
\hspace{0.3pc}
{\subfigure[][Comparing CPU time]{\resizebox{.48\textwidth}{!}{\includegraphics{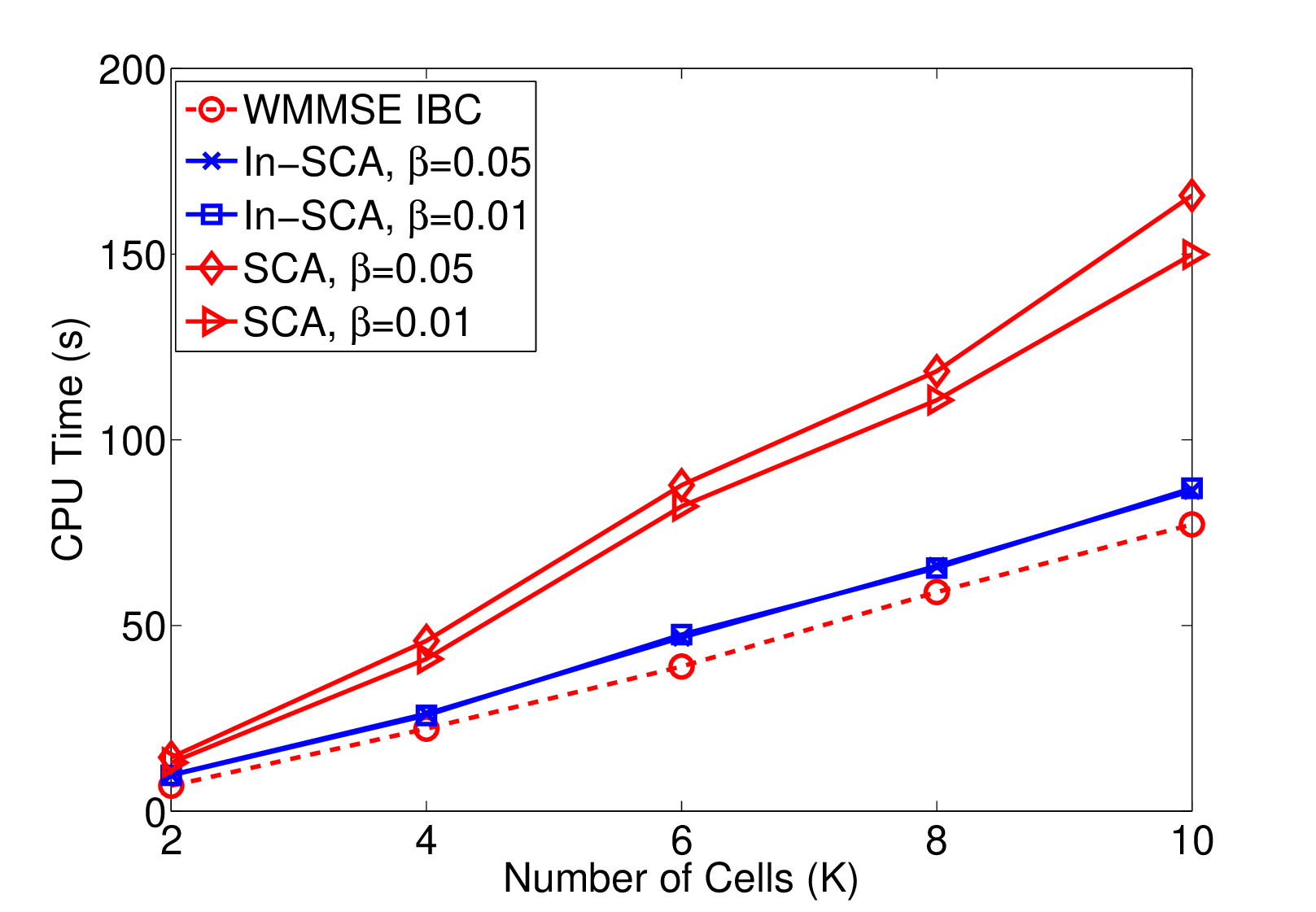}}}}
\end{center}\vspace{-0.1cm}
\caption{ Comparison of the CPU Time and the total iteration number required for
computation in HetNet setting with different sizes of the network.
$K=[2,4,6,8,10]$, $P^{\rm tot}_k=20$dB, $|\cQ_k|=6$, $|\cI_k|=10$,
$M=5$, $N=3$, $d=3$ for all
$i_k\in\cI$.}\label{figTimeIBCCellSized3}
\vspace{-0.5cm}
 \end{figure*}
The second set of experiments compare the general linear precoding and the ZF precoding. In Fig. \ref{figRateChangII}, we show the performance of the algorithms with $K=6$, $|\cQ_k|=6$, $|\cI_k|=\{2, 4, \cdots, 12\}$, $M=\{4, 6\}$, $N=2$ and $d=2$, $P^{\rm tot}_k=20$dB.
Note that when increasing the number of users $|\cI_k|$, more resources are dedicated to eliminating the intra-cell interference. However, as suggested in Fig. \ref{figRateChangII}, this is not an ideal strategy to deal with interference in densely deployed HetNet. The performance of the ZF-based strategies degrades as $|\cI_k|$ increases. Further, when $M=4$ and when $|\cI_k|$ approaches the maximum number of allowable users for which the ZF strategy is still feasible ($12$ in this case), the performance degradation is more severe. The reason is that when the nodes densely deployed, inter-cell interference becomes equally detrimental as the intra-cell interference. The general linear precoding does not pre-specify the source of interference to be mitigated, therefore appears to be a better candidate for dealing with interference.


\begin{figure*}[htb]\vspace{-1cm}
\begin{center}
{\subfigure[][$M=4$]{\resizebox{.48\textwidth}{!}{\includegraphics{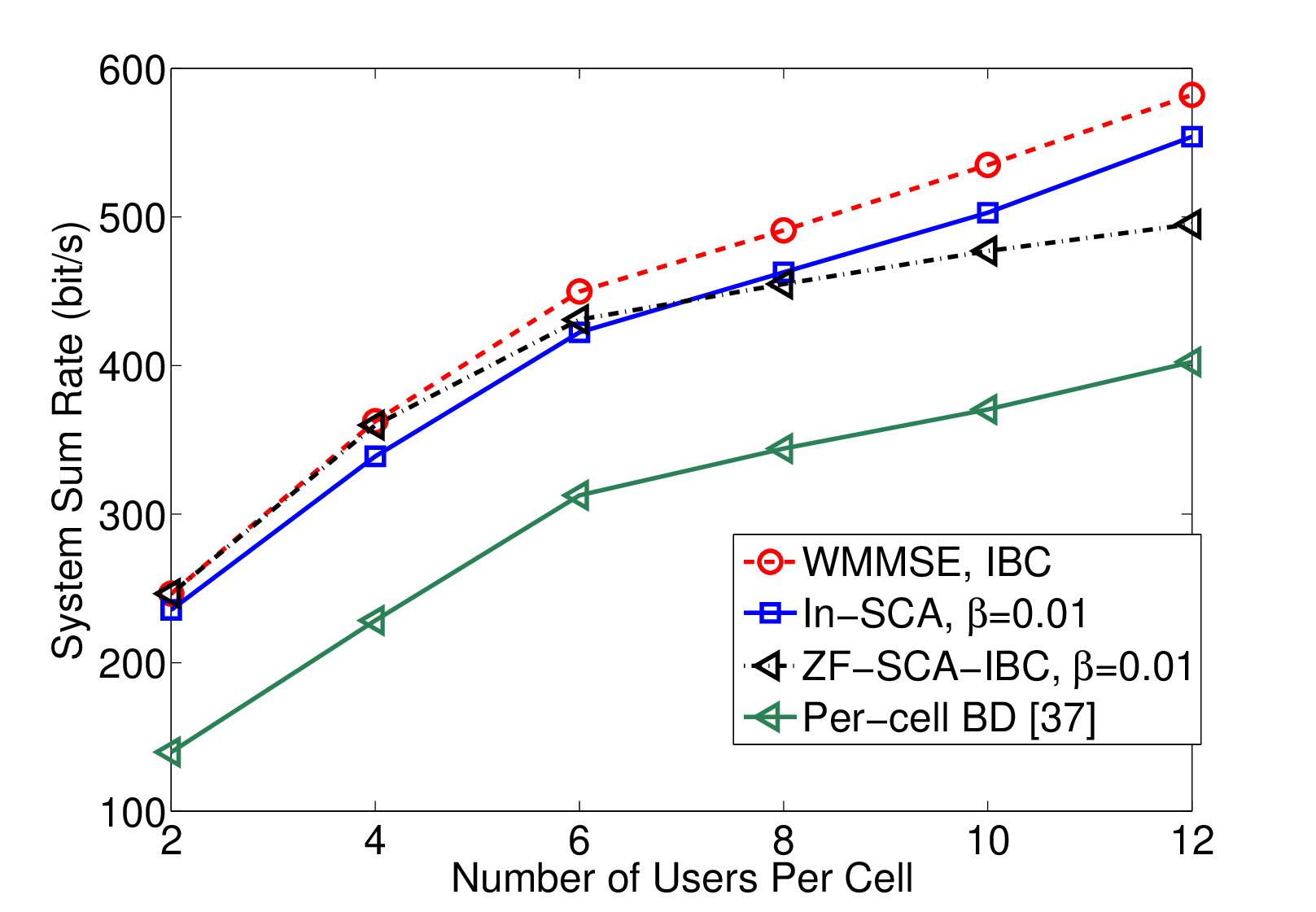}}}}
\hspace{0.3pc}
{\subfigure[][$M=6$]{\resizebox{.48\textwidth}{!}{\includegraphics{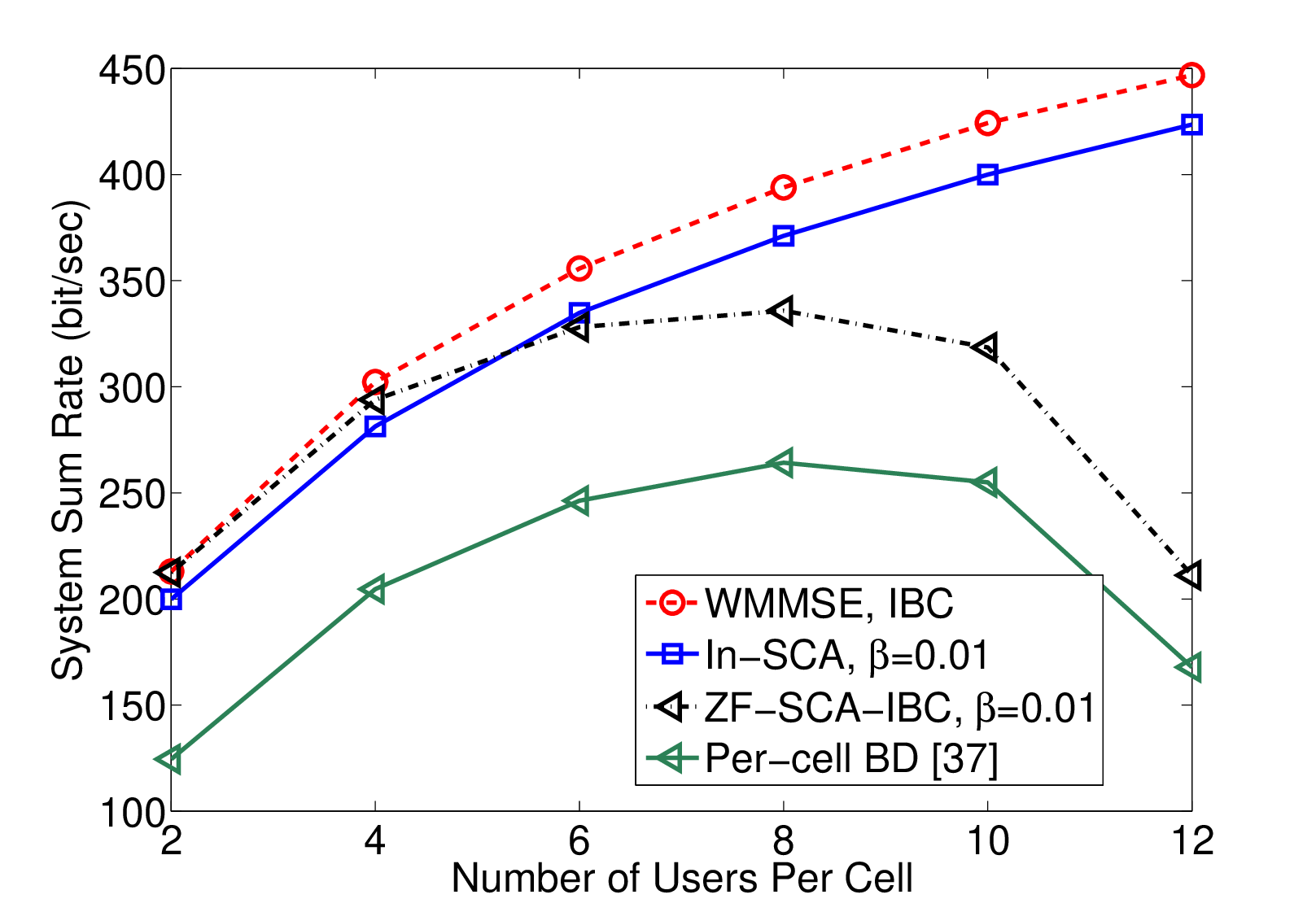}}}}
\end{center}
\vspace{-0.3cm}
\caption{ Comparison of system sum rate achieved for different algorithms with
different number of users per cell. $K=6$, $P^{\rm tot}_k=20$dB,
$|\cQ_k|=6$, $|\cI_k|=[2,4,6,\cdots, 12]$, $N=2$, $d=2$
for all $i_k\in\cI$. Left: performance with $M=6$. Right: performance with $M=4$.}\label{figRateChangII}
\vspace{-0.3cm}
 \end{figure*}

\vspace{-0.3cm}
\subsection{Partial CoMP in HetNet}
In this section, we jointly design the clustering and
linear precoding schemes in a partial CoMP setting. To induce the
desired clustering structure, we specialize the penalty terms in the
objective to take the following form \cite{hong12sparse, hong12_asilomar}:
$s_{i_k}^{q_k}(\bV)=\lambda\|\bV^{q_k}_{i_k}\|_F$, $\forall~i_k,
q_k$ where $\lambda>0$ is chosen appropriately to balance the
resulting group size and the throughput performance. To provide
certain level of fairness among the users, we use the geometric mean utility of one plus the user's
rate, i.e., $u_{i_k}(R_{i_k})=\log(1+R_{i_k})$.

We compare the performance of the following algorithms:
{\bf (1)} {WMMSE}, the baseline algorithm that optimizes the
precoders by treating all the BSs in each cell as a single virtual
BS; {\bf(2)} The algorithm proposed in \cite{hong12sparse} (which can be viewed
as a special case of the SCA algorithm for solving the penalized
utility maximization problem, see Remark \ref{remarkPartialComP}), modified by adding the proximal term with $\beta=0.001$;
{\bf (3)}: In-SCA algorithm in Table \ref{tableVIBC2}, with $\beta=0.001$.

The results are summarized in Fig. \ref{figRateSparse}--\ref{figIterationHetNet} and Table \ref{tableCPUTime}. Both the SCA and the In-SCA based approaches are able to keep a large portion of the system sum rate achieved by the full per-cell
cooperation, while using only small cluster sizes. In contrast, the WMMSE always mandates all the BSs to transmit to each user. The SCA and In-SCA  use similar number of iterations and achieve similar performance. The advantage of using the inexact version is in the efficiency of computation.


\begin{figure*}[htb]\vspace{-0.3cm}
\begin{center}
{\subfigure[][System throughput]{\resizebox{.48\textwidth}{!}{\includegraphics{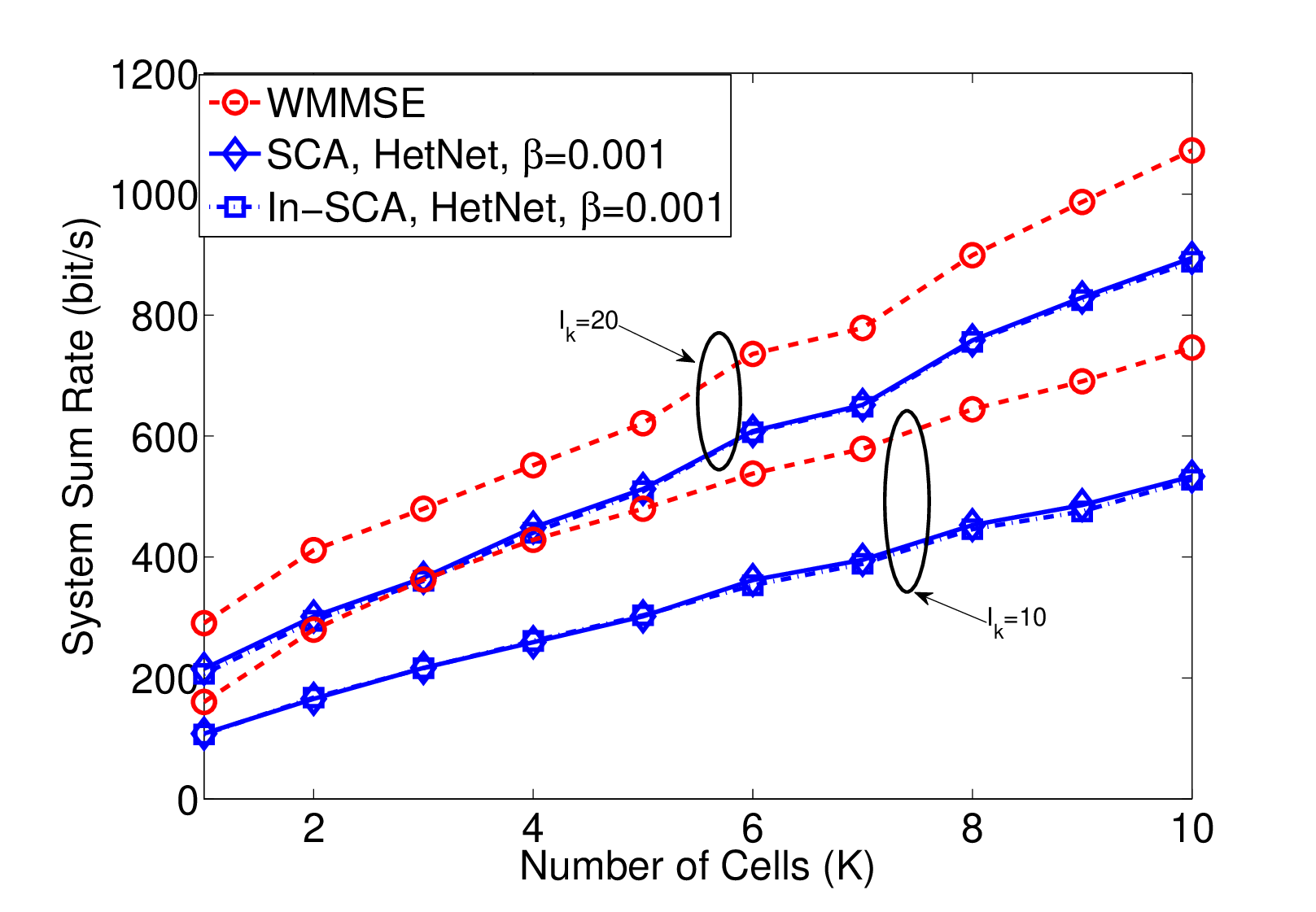}}}}
\hspace{0.3pc}
{\subfigure[][Averaged cluster size]{\resizebox{.48\textwidth}{!}{\includegraphics{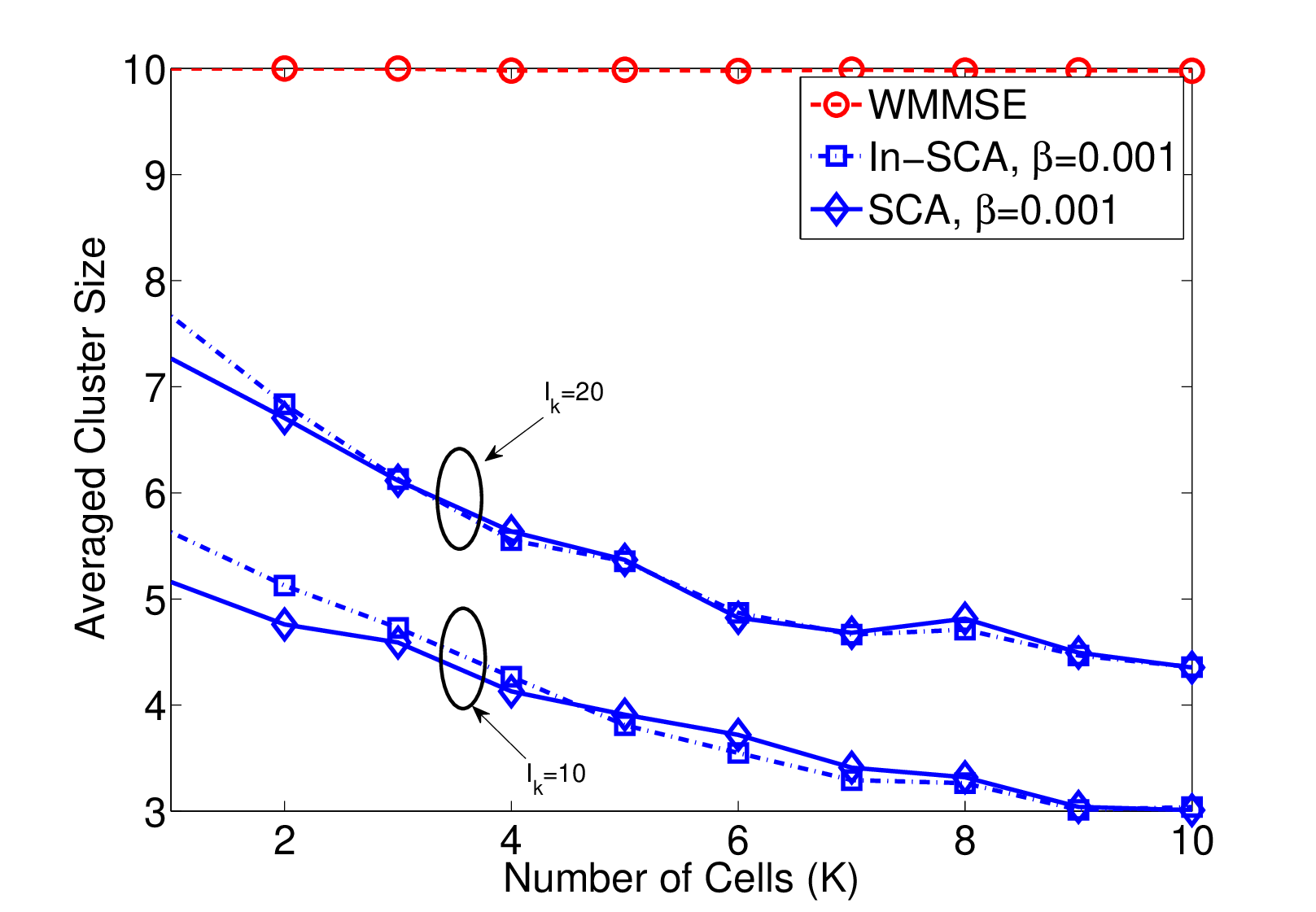}}}}
\end{center}
\vspace{-0.3cm}
\caption{ \footnotesize Comparison of the system performance of
different algorithms in HetNet. $K=[1,2,\cdots, 10]$, $P^{\rm tot}_k=20$dB, $|\cQ_k|=10$,
$|\cI_k|=[10,20]$, $M=4$, $N=2$, $d=1$ for all $i_k\in\cI$.
$\lambda=0.1$ when $|\cI_k|=10$, and $\lambda=0.05$ when
$|\cI_k|=20$. Left: System throughput. Right: Averaged cluster size.}\label{figRateSparse}
\vspace{-0.3cm}
 \end{figure*}

\begin{figure}
\includegraphics[width=
0.95\linewidth]{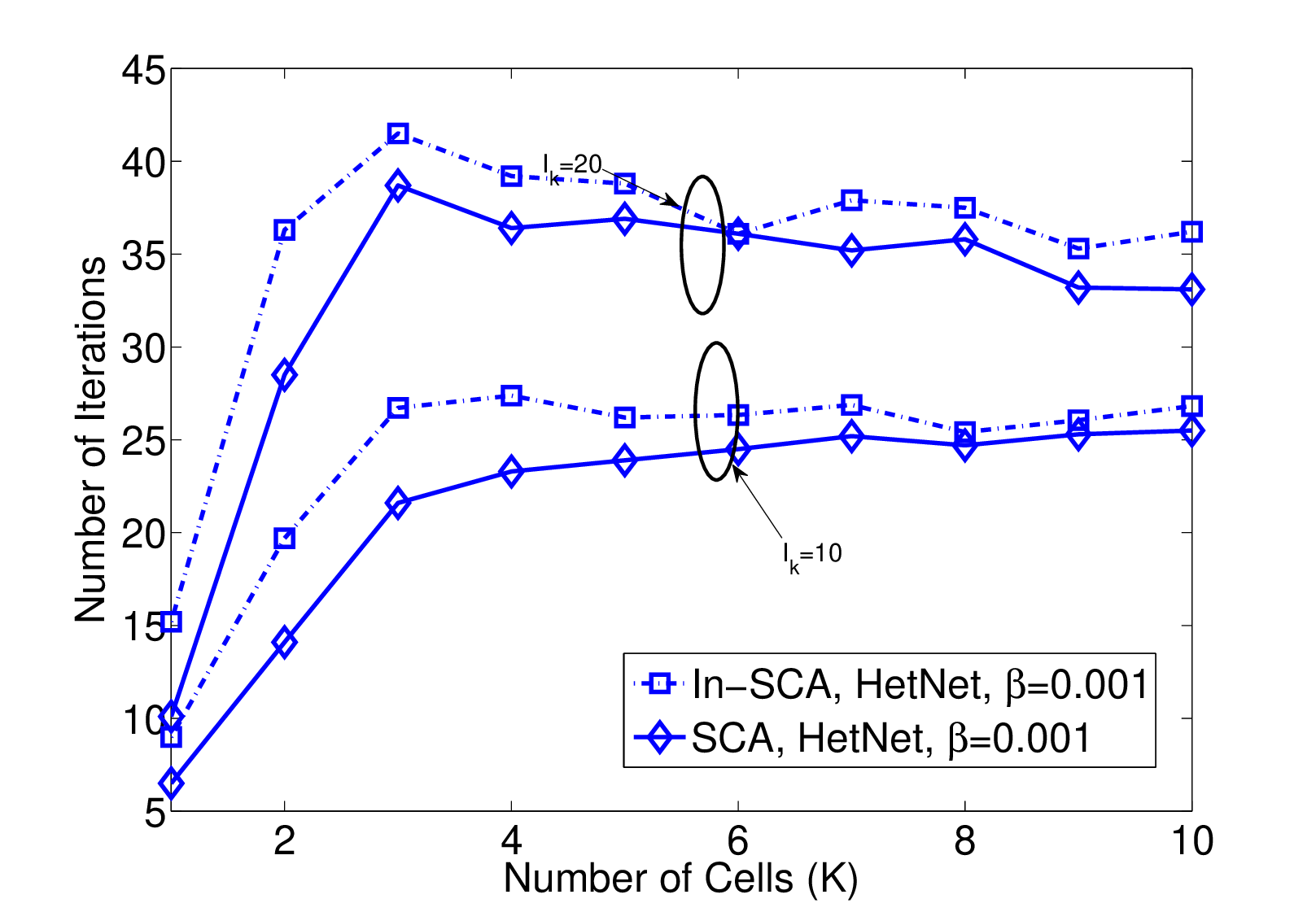}
\caption{\footnotesize Comparison of the system performance of
different algorithms in HetNet. $K=[1,2,\cdots, 10]$, $P^{\rm tot}_k=20$dB, $|\cQ_k|=10$,
$|\cI_k|=[10,20]$, $M=4$, $N=2$, $d=1$ for all $i_k\in\cI$.
$\lambda=0.1$ when $|\cI_k|=10$, and $\lambda=0.05$ when
$|\cI_k|=20$.}\label{figIterationHetNet}
\end{figure}

\begin{table}\small
\begin{tabular}{|c|c |c| c|c|c| } \hline
 & {\bf K=2}& {\bf K=4}
 &{\bf K=6} & {\bf K=8} & {\bf K=10}\\
\hline {\bf WMMSE} ($I_k=20$) &20.1 & 61.3
 & 73.4 & 89.9& 120.3\\
\hline {\bf SCA} ($I_k=20$) &105.2 & 204.4&258.1
 & 359.4& 443.7\\
\hline {\bf IN-SCA} ($I_k=20$) & 57.9 &88.2
 &109.9& 153.1& 189.1\\
 \hline
 \hline
{\bf WMMSE} ($I_k=10$) & 5.2& 15.2
 &31.2& 39.3& 58.1\\
\hline {\bf SCA} ($I_k=10$) & 13.9& 38.2&69.3& 140.5& 164.5\\
\hline {\bf IN-SCA} ($I_k=10$) & 7.4& 19.6
 &40.1& 45.3& 72.3\\
  \hline
\end{tabular}  \label{tableCPUTime}
\caption{CPU Time Needed for Different Algorithms
(Unit: Second)}
\vspace{-0.5cm}
\end{table}

\section{Conclusion}\label{secConclusion}

In this work we study an important family of interference management problems arising in the HetNet. The main novelty of this work lies in the proposal to achieve decomposition across the interference-coupled networks by using successive convex approximation. Our proposed approach has low computational complexity, as each of the subproblems to be solved is convex and decomposes across the cells. Depending on whether the subproblems are solved exactly, two general algorithms have been proposed, each having applications in a few interference management problems. In the future, we plan to extend our framework to other problems for interference management beyond those mentioned in this work.
\vspace{-0.3cm}

\appendix

\subsection{Proof of Lemma \ref{lemmaConvexity}}\label{app:Lemma1}
Consider the epigraph of $l_{i_k}({\bf V}_{i_k}, {\bf C}_{i_k})$,
i.e.,
$\left\{ ({\bf V}_{i_k}, {\bf C}_{i_k}, t) \mid l_{i_k}({\bf V}_{i_k}, {\bf C}_{i_k}) \leq t ~\right\}.$ It suffices to show that
the epigraph $({\bf V}_{i_k}, {\bf C}_{i_k}, t)$ is a convex set
\cite[Chapter 3]{boyd04}. To this end, let us consider the following
extended set (with ${\bf Z}_{i_k} \succeq {\bf 0}$ being a slack
variable):{\small
\begin{align} \label{eq:1}
   &\big\{ ({\bf V}_{i_k}, {\bf C}_{i_k},{\bf Z}_{i_k}, t)~|
   ~ {\rm Tr}[{\bf Z}_{i_k}] \leq t, ~{\bf Z}_{i_k} \succeq {\bf 0}, \nonumber\\
   &\quad {\bf Z}_{i_k} \succeq  \widehat{\bf E}_{i_k}^{-1/2} {\bf V}_{i_k}^H ({\bf H}^k_{i_k})^H {\bf C}_{i_k}^{-1}
   {\bf H}^k_{i_k} {\bf V}_{i_k} \widehat{\bf E}_{i_k}^{-1/2}  \big\}.
 \end{align}}
\!\!It is not hard to show that $({\bf V}_{i_k}, {\bf C}_{i_k}, t)$ is
just a projection of the set defined by~\eqref{eq:1}. Therefore, if
the extended set $({\bf V}_{i_k}, {\bf C}_{i_k},{\bf Z}_{i_k}, t)$
is convex, then $({\bf V}_{i_k}, {\bf C}_{i_k}, t)$ is also convex.
By applying Schur's complement, we have{\small
\begin{align}\label{eq:2}
 &{\bf Z}_{i_k} \succeq  \widehat{\bf E}_{i_k}^{-1/2} {\bf V}_{i_k}^H ({\bf H}^k_{i_k})^H
 {\bf C}_{i_k}^{-1} {\bf H}^k_{i_k} {\bf V}_{i_k} \widehat{\bf E}_{i_k}^{-1/2}\nonumber\\
 &\Longleftrightarrow
\begin{bmatrix}
  {\bf Z}_{i_k} &   \widehat{\bf E}_{i_k}^{-1/2} {\bf V}_{i_k}^H ({\bf H}^k_{i_k})^H \\
  {\bf H}^k_{i_k} {\bf V}_{i_k} \widehat{\bf E}_{i_k}^{-1/2} & {\bf C}_{i_k}
\end{bmatrix} \succeq {\bf 0} .
\end{align}}
\!\!This shows that \eqref{eq:1} is a convex set whenever $\bC_{i_k}\succeq 0$.

\subsection{Derivation of the equality in \eqref{eq:second_layer}} \label{appDirectional}
Let us investigate the directional derivative of the function $l_{i_k}$.
We can first obtain \eqref{eq:first_order}.
\begin{table*}[t]
\small{
{
\begin{align}\label{eq:first_order}
&\frac{d l_{i_k}(\bV_{i_k}+t\bd, \bC_{i_k}+t\bM_{i_k}) }{d
t}=\frac{d \trace\left[\widehat{\bE}_{i_k}^{-1} (\bV_{i_k}+t
\bd)^H(\bH^k_{i_k})^H(\bC_{i_k}+t\bM_{i_k})^{-1}\bH^k_{i_k}(\bV_{i_k}+t\bd)\right]}{d
t}\nonumber\\
&=\trace\left[\widehat{\bE}_{i_k}^{-1}
(\bd)^H(\bH^k_{i_k})^H(\bC_{i_k}+t\bM_{i_k})^{-1}\bH^k_{i_k}(\bV_{i_k}+t\bd)\right]+\trace\left[\widehat{\bE}_{i_k}^{-1} (\bV_{i_k}+t
\bd)^H(\bH^k_{i_k})^H(\bC_{i_k}+t\bM_{i_k})^{-1}\bH^k_{i_k}\bd\right]\nonumber\\
&\quad\quad-\trace\left[\widehat{\bE}_{i_k}^{-1} (\bV_{i_k}+t
\bd)^H(\bH^k_{i_k})^H(\bC_{i_k}+t\bM_{i_k})^{-1}\bM_{i_k}(\bC_{i_k}+t\bM_{i_k})^{-1}\bH^k_{i_k}(\bV_{i_k}+t\bd)\right].
\end{align}}
\rule{\linewidth}{0.2mm}}
\vspace{-0.5cm}
\end{table*}

Let $\bV_{i_k}=\bhV_{i_k}$, $\bd=\left(\bV_{i_k}-\bhV_{i_k}\right)$, $\bC_{i_k}=\bhC_{i_k}$ and $\bM_{i_k}=\mathbf{0}$, we obtain the expression{\small
\begin{align}
&\frac{d
l_{i_k}(\widehat{\bV}_{i_k}+t(\bV_{i_k}-\widehat{\bV}_{i_k}),
\widehat{\bC}_{i_k})}{d t}\bigg|_{t=0}\nonumber\\
&=\trace\left[\widehat{\bE}_{i_k}^{-1}(\bV_{i_k}-\widehat{\bV}_{i_k})^H(\bH^k_{i_k})^H\widehat{\bC}_{i_k}^{-1}
\bH^k_{i_k}\widehat{\bV}_{i_k}\right]\nonumber\\
&\quad+\trace\left[\widehat{\bE}_{i_k}^{-1}\widehat{\bV}^H_{i_k}(\bH^k_{i_k})^H\widehat{\bC}_{i_k}^{-1}
\bH^k_{i_k}(\bV_{i_k}-\widehat{\bV}_{i_k})\right].\nonumber
\end{align}}
Similarly, let { $\bV_{i_k}=\bhV_{i_k}$, $\bd=\mathbf{0}$, $\bC_{i_k}=\bhC_{i_k}$} and { $\bM_{i_k}=\left(\bC_{i_k}-\bhC_{i_k}\right)$}, we obtain the expression{\small
\begin{align}
&\frac{d l_{i_k}(\widehat{\bV}_{i_k},
\widehat{\bC}_{i_k}+t(\bC_{i_k}-\widehat{\bC}_{i_k}))}{d
t}\bigg|_{t=0}\nonumber\\
&=-\trace\left[\widehat{\bE}_{i_k}^{-1}\widehat{\bV}^H_{i_k}(\bH^k_{i_k})^H\widehat{\bC}_{i_k}^{-1}(\bC_{i_k}-\widehat{\bC}_{i_k})
\widehat{\bC}_{i_k}^{-1}\bH^k_{i_k}\widehat{\bV}_{i_k}\right]\nonumber.
\end{align}}
This completes the proof.

\vspace{-0.3cm}
\subsection{Proof of Theorem \ref{thmGeneralConvergence}}\label{appTheorem1}
We first prove the convergence for the SCA algorithm. It is easy to observe that the objective value of the
problem \eqref{problemSyS} is monotonically nondecreasing, i.e., we
have{\small
\begin{align}
u(\bV(t+1))&\stackrel{\rm (i)}\ge h^{\beta}(\bV(t+1);
\bV(t))-s(\bV(t+1))\nonumber\\
&\stackrel{\rm (ii)}\ge h^{\beta}(\bV(t);
\bV(t))-s(\bV(t))\stackrel{\rm
(iii)}=u(\bV(t))\label{eqSeriesInequality1}
\end{align}}
where step $\rm (i)$ is from \eqref{eqLowerBoundPropertyInequality};
step $\rm (ii)$ is from the fact that $\bV(t+1)$ is the solution to
the problem \eqref{problemLowerBoundGeneral}; step $\rm (iii)$ is
because of \eqref{eqLowerBoundPropertyInequality}. As both $f(\bV)$
and $s(\bV)$ are upper bounded for all $\bV$ in the feasible set, it
follows that the sequence $\{u(\bV(t))\}$ converges. Let $\bar{u}$
denote its limit. This result combined with
\eqref{eqSeriesInequality1} implies that{\small
\begin{align}
\lim_{t\to\infty} h^{\beta}(\bV(t+1);
\bV(t))-s(\bV(t+1))=\bar{u}.\label{eqHLimit1}
\end{align}}
\!\!Using a similar argument as in \cite[Lemma 1]{Razaviyayn12SUM}, and
use the fact that $f(\bullet)$ and $h(\bullet; \bhV)$ satisfy
\eqref{eqLowerBoundPropertyInequality},
we can show that for any feasible $\bhV$ the directional derivative
of $f(\bullet)$ at the point $\bhV$ equals that of $h^{\beta}(\bullet;\bhV)$ at
the point $\bhV$, i.e.,{\small
\begin{align}
\lim_{r\to 0}\frac{f(\bhV+r\bD)-f(\bhV)}{r}=\lim_{r\to
0}\frac{h^{\beta}(\bhV+r\bD;\bhV)-h^{\beta}(\bhV)}{r}\label{eqDerivitiveEqual}
\end{align}}
where $\bD$ is any feasible direction that satisfies $\bhV+\bD\in\mathcal{V}$. 

 Let $\{\bV(t_m)\}_{m=1}^{\infty}$ be a converging subsequence of
 $\bV(t)$ with limit $\bV^*$.  Clearly we have
{\[h^{\beta}(\bV^*; \bV^*)-s(\bV^*)=u(\bV^*)=\bar{u}.\]}
\!\!Then by Step 1 of the algorithm, the optimality of $\bV(t)$ implies
%
{\small\begin{align*}
&h^{\beta}(\bV(t_m+1); \bV(t_m))-s(\bV(t_m+1))\nonumber\\
&\ge h^{\beta}(\bV; \bV(t_m))-s(\bV),\;
\forall~ \bV\in\cV.
\end{align*}}
Taking limit on both sides and use \eqref{eqHLimit1}, we obtain
{\small$$\bar{u}\ge h^{\beta}(\bV; \bV^*)-s(\bV), \quad \forall~\bV.$$}
\!\!It follows that $h^{\beta}(\bV^*; \bV^*)-s(\bV^*)\ge h^{\beta}(\bV; \bV^*)-s(\bV)$
for all feasible $\bV$, which further implies that{\small
\begin{align*}
\left(h^{'}_{\bD}(\bV;\bV^*)-s^{'}_{\bD}(\bV)\right)\big|_{\bV=\bV^*}\le
0, \ \forall~\bD+\bV^*\in\cV.
\end{align*}}
Utilizing \eqref{eqDerivitiveEqual}, we obtain{\small
\begin{align*}
u^{'}_{\bD}(\bV)\big|_{\bV=\bV^*}&=\left(f^{'}_{\bD}(\bV)-s^{'}_{\bD}(\bV)\right)\big|_{\bV=\bV^*}\le
0, \forall~\bD+\bV^*\in\cV,
\end{align*}}
which says that $\bV^*$ is a stationary solution to the problem
\eqref{problemSyS}.

We then prove the result for the IN-SCA algorithm. Similarly as the previous case, we have{\small
\begin{align}
&u(\bV(t+1))\ge h^{\beta}(\bV(t+1);
\bV(t))-s(\bV(t+1))\nonumber\\
&\stackrel{\rm (i)}\ge h^{\beta}(\bV(t);
\bV(t))-s(\bV(t))+\eta\|\bV(t+1)-\bV(t)\|^2_F\nonumber\\
&=u(\bV(t))+\eta\|\bV(t+1)-\bV(t)\|^2_F\label{eqSeriesInequality2}
\end{align}}
\!\!where $\rm (i)$ is due to condition \eqref{eq:condition_inexact}. The above inequality implies that{\small
\begin{align*}
u(\bV(t+1))-u(\bV(0))\ge \sum_{t=1}^{t}\eta\|\bV(t+1)-\bV(t)\|^2_F.
\end{align*}}
\!\!Taking limit on both sides, and by the upper-boundedness of $u(\bV)$ for all $\bV\in\cV$, we obtain{\small
\begin{align}
\lim_{t\to\infty}&\|\bV(t+1)-\bV(t)\|^2_F=0,\quad \lim_{t\to\infty} u(\bV(t+1))=\bar{u}.\label{eqHLimit2}
\end{align}}
\!\!Let $\{\bV(t_m)\}_{m=1}^{\infty}$ be a converging subsequence of
 $\bV(t)$, and denote its limit as $\bV^*$. Combining the first equation in \eqref{eqHLimit2} and the second condition in \eqref{eq:condition_inexact}, we have{\small
 \begin{align*}
 &\bV^*=\arg\min_{\bV} h^{\beta}(\bV; \bV^*)-s(\bV) \nonumber\\
 &\Longleftrightarrow
 h^{\beta}(\bV^*; \bV^*)-s(\bV^*)\ge  h^{\beta}(\bV; \bV^*)-s(\bV), \; \forall~\bV\in\cV.
 \end{align*}}
\!\!Then the conclusion follows from a similar argument as in the first part of the proof.

\vspace{-0.2cm}
\subsection{Proof of Corollary \ref{cor:equivalence}}\label{eq:appC}
We focus on the case that each coordinate $m_k\in\cQ_k$ is updated precisely once in Step 3. The case in which each coordinate is updated {\it at least} once is a straightforward extension.

We only need to verify the two conditions \eqref{eq:condition_inexact}. Define the following short-handed notation{\small
\begin{align}
\bV^{-m_k}_{i_k}&:=\{\bV^{p_k}_{i_k}\}_{p_k\ne m_k}, \; \bV^{-m_k}:=\{\bV^{p_k}\}_{p_k\ne m_k}\nonumber.
\end{align}}
\!\!Also define $\nabla_{m_k} g^{\beta}_{i_k}(\bV^{m_k}_{i_k}, \bV^{-m_k}_{i_k};\bhV)$ as the gradient of $g^{\beta}_{i_k}(\bullet)$ with respect to the block $\bV^{m_k}_{i_k}$.

At any given $\bhV$, we fix $\bV^{-m_k}$ and update $\bV^{m_k}$ by using \eqref{eqVIBCComputeV}. First observe that the optimality condition for the per-block problem \eqref{problemVIBC-BLK} is given by{\small
\begin{align}\label{eq:optimality}
&\sum_{i_k\in \cI_k}\left\langle \nabla_{m_k}g^{\beta}_{i_k}\left(\bV^{m_k*}_{i_k}, \bV^{-m_k}_{i_k}; \bhV\right), \bV^{m*}_{i_k}-\btV^{m_k}_{i_k}\right\rangle\ge 0,\nonumber\\
& \quad\quad\quad\quad\quad\forall\;\btV^{m_k}\in \cV^{m_k}.
\end{align}}
Utilizing this property, we have{\small
\begin{align}
&\sum_{i_k\in \cI_k} g^{\beta}_{i_k}\left(\bV^{m_k*}_{i_k}, \bV^{-m_k}_{i_k}; \bhV\right)-g^{\beta}_{i_k}\left(\bhV^{m_k}_{i_k}, \bV^{-m_k}_{i_k}; \bhV\right)\label{eq:strong:convexity}\\
&\stackrel{\rm (i)}\ge \sum_{i_k\in \cI_k}\left\langle \nabla_{m_k}g^{\beta}_{i_k}\left(\bV^{m*}_{i_k}, \bV^{-m_k}_{i_k}; \bhV\right), \bV^{m*}_{i_k}-\bhV^{m_k}_{i_k}\right\rangle\nonumber\\
&\quad +\sum_{i_k\in\cI_k}\frac{\beta}{2}\|\bV^{m_k*}_{i_k}-\bhV^{m_k}_{i_k}\|^2_F\nonumber\\
&\stackrel{\rm (ii)}\ge \sum_{i_k\in\cI_k}\frac{\beta}{2}\|\bV^{m_k*}_{i_k}-\bhV^{m_k}_{i_k}\|^2_F\nonumber
\end{align}}
\!\!where $\rm (i)$ is from the strong concavity of {$g^{\beta}_{i_k}\left(\bV^{m_k}_{i_k}, \bV^{-m_k}_{i_k}; \bhV\right)$} with respect to $\bV^{m_k}_{i_k}$ (with modulus $\beta$); 
$\rm (ii)$ comes from the optimality condition \eqref{eq:optimality}. After one round of the update where each $m_k$ is updated once, we have{\small
\begin{align}
&h^{\beta}(\bV^*; \bhV)-h^{\beta}(\bhV; \bhV)\nonumber\\
&\ge \sum_{k\in\cK}\sum_{i_k\in \cI_k}\sum_{m_k\in \cQ_k}\frac{\beta}{2}\|\bV^{m_k*}_{i_k}-\bhV^{m_k}_{i_k}\|^2_F\ge \frac{\beta}{2}\|\bV^*-\bhV\|^2_F\nonumber
\end{align}}
\!\!which proves the first condition in \eqref{eq:condition_inexact}.

To show the second condition in \eqref{eq:condition_inexact}, we fix $\bV^{-m_k}=\bhV^{-m_k}$ and consider $\bV^{m_k*}$ obtained by using \eqref{eqVIBCComputeV}. Clearly if for each $m_k\in\cQ_k$, $\bhV^{m_k}=\bV^{m_k*}$ is true, then we have{\small
\begin{align*}
\bhV^{m_k}=\arg \max_{\bV^{m_k}}&\sum_{i_k\in{\mathcal{I}_k}}g^{\beta}_{i_k}(\bV^{m_k}_{i_k},\bhV^{-m_k}_{i_k};\bhV), \; \forall~m_k\in\cQ_k.
\end{align*}}
\!\!The optimality conditions of these $|\cQ_k|$ problems are given below
{\small
\begin{align}\label{eq:optimality2}
&\sum_{i_k\in \cI_k}\left\langle \nabla_{m_k}g^{\beta}_{i_k}\left(\bhV^{m_k}_{i_k}, \bhV^{-m_k}_{i_k}; \bhV\right), \bhV^{m_k}_{i_k}-\btV^{m_k}_{i_k}\right\rangle\ge 0, \nonumber\\
&\quad \quad\quad\quad\; \forall~\btV^{m_k}\in \cV^{m_k}, \; \forall~m_k\in \cQ_k.
\end{align}}
which collectively imply the optimality condition of the following problem {\small
\begin{align*}
\bhV^{k}=\arg \max_{\bV^{k}}\;&\sum_{i_k\in{\mathcal{I}_k}}g^{\beta}_{i_k}(\bV_{i_k};\bhV).
\end{align*}}
Enumerating over all cells $k\in\cK$, we obtain{\small
\begin{align*}
\bhV&=\arg \max_{\bV}\;\sum_{k\in\cK}\sum_{i_k\in{\mathcal{I}_k}}g^{\beta}_{i_k}(\bV_{i_k}; \bhV)=\arg\max_{\bV}\; h^{\beta}(\bV; \bhV).
\end{align*}}
\!\!which satisfies \eqref{eq:condition_inexact2}. The proof is completed.}

\bibliographystyle{IEEEbib}
{\scriptsize
\bibliography{ref}
}

\end{document}